\theoremstyle{definition}
\newtheorem{definition}{Definition}
\newtheorem{theorem}{Theorem}
\newtheorem{proposition}{Proposition}
\newtheorem{lemma}{Lemma}
\newtheorem{corollary}{Corollary}
\theoremstyle{remark}
\newtheorem*{remark}{Remark}
\newcommand{\agentset}{\mathcal{N}}
\newcommand{\edgeset}{\mathcal{E}}
\newcommand{\NE}{\mathbf{\bar{x}}}
\newcommand{\x}{\mathbf{x}}
\newcommand{\y}{\mathbf{y}}
\newcommand{\z}{\mathbf{z}}
\newcommand{\e}{\mathbf{e}}
\newcommand{\p}{\mathbf{p}}
\newcommand{\R}{\mathbb{R}}
\title{Beyond Strict Competition:\\ Approximate Convergence of Multi Agent Q-Learning Dynamics}
\author{
	Aamal Abbas Hussain\\
	\texttt{aamal.hussain15@imperial.ac.uk}
	\and
	Francesco Belardinelli\\
	\texttt{francesco.belardinelli@imperial.ac.uk}
	\and 
	Georgios Piliouras\\
	\texttt{georgios@sutd.edu.sg}
}
\begin{document}
\maketitle

    \begin{abstract}
    The behaviour of multi-agent learning in competitive settings is often considered under the restrictive assumption of a zero-sum game. Only under this strict requirement is the behaviour of learning well understood; beyond this, learning dynamics can often display non-convergent behaviours which prevent fixed-point analysis. Nonetheless, many relevant competitive games do not satisfy the zero-sum assumption.
    Motivated by this, we study a smooth variant of Q-Learning, a popular reinforcement learning dynamics which balances the agents' tendency to maximise their payoffs with their propensity to explore the state space. We examine this dynamic in games which are `close' to network zero-sum games and find that Q-Learning converges
    to a neighbourhood around a unique equilibrium. The size of the neighbourhood is determined by the `distance' to the zero-sum game, as well as the exploration rates of the agents. We complement these results by providing a method whereby, given an arbitrary network game, the `nearest' network zero-sum game can be found efficiently. As our experiments show, these guarantees are independent of whether the dynamics ultimately reach an equilibrium, or remain non-convergent. 
    \end{abstract}

    \section{Introduction} 
    \label{sec:introduction}
    
        The convergence of multi-agent learning in competitive settings has long been studied under the context of zero-sum games. The ability to make
        strong predictions in zero-sum games follows from its enforcement of strict competition between agents. Indeed many positive results have been
        achieved which show the convergence, in time average, of no regret learning algorithms to a Nash Equilibrium (NE) \cite{mertikopoulos:finite,bailey:fast-furious}. Yet time average convergence does not always imply convergence of the last-iterate. Under this context, zero-sum games, and their network variants, have received much attention, showing cyclic behaviour for some
         algorithms \cite{piliouras:cycles,hofbauer:hamiltonian} and asymptotic convergence for others 
        \cite{piliouras:zerosum,ewerhart:fp,hofbauer:zerosum}.
    
        Yet in multi-agent settings the satisfaction of strict competition cannot be taken for granted. The reason for this is simple: not all competitive
        games are zero-sum. Another contributor is noise; in practice payoffs measured by agents may be subject
        to perturbations so that the underlying game no longer satisfies the zero-sum condition. It is natural, then, to ask whether the convergence
        structure holds as we move away from the requirement of strict competition.
    
        Unfortunately, the general answer to this question is \emph{no}. Learning algorithms are known to display complex, even chaotic behaviour when
        even slightly perturbed away from the safe haven of zero sum games \cite{sato:rps,galla:complex,galla:cycles,cheung:decomposition}. In fact, this
        problem becomes even more prevalent as the number of players is increased \cite{sanders:chaos}. The introduction of chaos makes
        the exact prediction of long-term behaviours impossible in a wide class of games and we are led to a fundamental dichotomy between the need, and
        ability to understand multi-agent learning in competitive games. 
    
        \paragraph{Summary of Main Contributions} 
        \label{par:summary_of_main_contributions}
    
        To make progress in understanding general competitive games, we consider the natural starting point of \emph{near network zero-sum games}. The
        concept of `close' games has been introduced in the context of potential games \cite{candogan:nearpotential}, which model strictly cooperative
        settings. Following its introduction, a number of results on the approximate convergence of learning algorithms have been determined in
        near-potential games \cite{anagnostides:last-iterate,cheng:nwpg} Motivated by the success of the cooperative setting, we
        re-purpose the distance notion for network zero-sum games (NZSG) which form the natural extension of the zero-sum game to multi
        agent settings \cite{cai:minimax}.
        
        In this setting, we study the \emph{(smooth) Q-Learning dynamics} which models the popular Q-Learning algorithm with Boltzmann exploration 
        \cite{sutton:barto,schwartz:MARL}. This learning model captures the behaviour of agents who attempt to maximise their payoffs whilst balancing a
        tendency to explore the space of their possible strategies.
    
        Our first contribution is to show that, in near network zero-sum games, Q-Learning converges to a neighbourhood around the unique
        equilibrium of the underlying NZSG. The size of this set goes to zero as the distance from the NZSG goes to zero and/or as the exploration rate
        of each agent increases. Given, then, the distance from the NZSG this size of the neighbourhood can be adjusted by manipulating the exploration
        rates of the agents. To assist in this process, we also provide upper bounds on the distance between network games based on the differences in
        payoff matrices and the network structure. Finally, in a similar light to \cite{candogan:nearpotential,cheng:nwpg} which consider potential games,
        we present a quadratic optimisation formulation for determining the closest NZSG to a given network game. Taken together, these results give a
        picture of the approximate behaviour of Q-Learning in competitive games which do not exactly satisfy the zero-sum condition. 
    
    
    \subsection{Related Work} 
    \label{sub:related_work}
        Studies on learning in competitive games often occur within the context of zero-sum games \cite{aumann:gt} or its network variants \cite{cai:minimax}. Indeed, due to the desirable structure of these games and the increasing interest of competitive systems \cite{abernethy:hamiltonian}, many positive results have been obtained concerning various learning dynamics, including Follow the Regularised Leader \cite{piliouras:hamiltonian,anagnostides:last-iterate}, fictitious play \cite{ewerhart:fp}, and Q-Learning \cite{piliouras:zerosum}. 
    
        By contrast, little is known about the behaviour of learning once we leave these games. In fact, non-convergent behaviour, including cycles and chaos, 
        appears to be increasingly prevalent as the NZSG condition is lifted \cite{galla:cycles,sato:rps,sato:qlearning,chakraborty:chaos} and as the number of agents increases \cite{sanders:chaos}. This presents a strong barrier when attempting to engineer competitive multi-agent systems, where the network zero-sum assumption need not hold \cite{ewerhart:fp,roberson:blotto}. Outside of this class, results on convergence often make
        restrictive assumptions, such as the existence of a potential function \cite{piliouras:potential,Monderer1996PotentialGames,harris:fp} which enforces strict cooperation amongst agents, or that the game has only two players and two actions \cite{galstyan:2x2,metrick:fp}. Of course,
        these do not cover the vast majority of games encountered in practice. In fact, the strongest result regarding learning outside of NZSG is a negative one: consider \cite{flokas:donotmix} which shows that the popular Follow the Regularised Leader dynamic cannot
        converge to a fully mixed Nash Equilibrium, regardless of the game structure. With all these taken together, it becomes clear that a complete
        picture of learning in games cannot be found by considering only convergence to a fixed point, but must include the eventuality of non-convergence.
    
        To make progress on this, we apply the concept of `nearness' in games. This was first introduced in the context of potential games \cite{candogan:nearpotential,cheng:nwpg} to extend the analysis of cooperative games to those which do not satisfy the
        potential assumption. With this, various learning algorithms including fictitious play \cite{candogan:nearpotential,aydin:fp} and Follow
        the Regularised Leader \cite{anagnostides:last-iterate}, could be understood in terms of \emph{approximate convergence}, i.e. convergence
        to a neighbourhood of an equilibrium. On the other hand, whilst \cite{cheung:decomposition} shows that games which deviate from the network zero sum setting can display chaos, little is known about how deviations from the strictly competitive setting affect the approximate convergence of learning. To our knowledge, the present work is the first to study, both theoretically and experimentally, near network zero sum games with an aim to understand approximate convergence, even in the face of chaos.
    
    
    
    \section{Preliminaries} 
    \label{sec:preliminaries}
    
    We study a game $\Gamma = (\agentset, (S_k, u_k)_{k \in \agentset})$ where $\agentset$ denotes
    a finite set of
    agents indexed by $k = 1, \ldots, N$. Each agent $k \in \agentset$ has a finite set of actions $S_k$ which are indexed
    by $i = 1, \ldots, n_k$. Players can also play a mixed strategy $\x_k$ which is a discrete probability distribution over
    its set of actions. The set of all such mixed strategies is the unit simplex in $\R^{n_k}$. More formally, the simplex
    associated to agent $k$ is $\Delta_k = \{ \x_k \in \R^{n_k} \, : \, \sum_{i \in S_k} x_{ki} = 1, \, x_{ki} \geq
    0, \, \forall i \in S_k\}$. We denote $\Delta = \times_{k \in \agentset} \Delta_k$ as the joint simplex over all
    agents, $\x = (\x_k)_{k \in \agentset}$ as the joint mixed strategy of all agents and, for any $k$, $\x_{-k} = (\x_l)_
    {l \in \agentset\backslash\{k\}} \in \Delta_{-k}$ as the joint strategy of all agents other than $k$.
    
    Also associated to each agent $k$ is a payoff function $u_k\,: \, \Delta_k \times \Delta_{-k} \rightarrow \R$. Then, for
    any $\x \in \Delta$, we define the reward to agent $k$ when they play action $i \in S_k$ as $r_{ki}(\x) := 
    \frac{\partial u_{ki}(\x)}{\partial x_{ki}}$. With this, we can write $r_k(\x) = (r_{ki}(\x))_{k \in \agentset}$ as the
    concatenation of all rewards to agent $k$. In this notation, $u_k(\x) = \langle \x_k, r_k(\x) \rangle$ where $\langle
    \x, \y \rangle = \x^\top \y$ is the inner product in $\R^n$. 
    
    \paragraph{Network Zero-Sum Games} 
    \label{par:network_zero_sum_games} A \emph{polymatrix} or \emph{network} game also contains a graph $(\agentset,
    \edgeset)$ in which $\agentset$ still denotes the set of agents and $\edgeset$ consists of pairs $(k, l) \in
    \agentset$ of agents, who are meant to be connected \cite{cai:minimax}.
    Each edge has associated a pair  $(A^{kl}, A^{lk})$ of matrices, which define the payoff to
    $k$ against $l$ and vice versa. The payoffs are then given by
    \begin{equation*}
        u_k(\x_k, \x_{-k}) = \sum_{(k, l) \in \edgeset} \langle \x_k, A^{kl} \x_l \rangle
    \end{equation*}

    We represent a network game as a tuple $\Gamma = (\agentset, \edgeset, (S_k)_{k \in \agentset}, (A^{kl}, A^{lk})_{(k,
    l) \in \edgeset})$. $\Gamma$ is a \emph{network zero-sum game} (NZSG) if, for all $\x \in \Delta$,
    \begin{equation*} \label{eqn::NZSG}
        \sum_k u_k(\x_k, \x_{-k}) = 0
    \end{equation*}
    
    A seminal result in the study of NZSG is that of \cite{cai:minimax}
    which shows that any NZSG is payoff equivalent to a pairwise constant sum game, where all the constants add to zero. More formally, this is stated in the following proposition
    
    \begin{proposition}[\cite{cai:minimax}, \cite{piliouras:zerosum}] \label{prop::cai-nzsg}
        Let $Z = (\agentset, \edgeset, (S_k)_{k \in \agentset}, (A^{kl}, A^{lk})_{(k,l) \in \edgeset})$ be a NZSG. For all $(k, l) \in \edgeset$
        there exist $(\hat{A}^{kl}, \hat{A}^{lk})$ and a constant $c_{kl} \in \R$ such that
        \begin{equation*}
            [\hat{A}^{kl}]_{ij} + [\hat{A}^{lk}]_{ji} = c_{kl}, \; \forall i \in S_k, j \in S_l,
        \end{equation*}
        with
        \begin{equation*}
            \sum_{(k, l) \in \edgeset} c_{kl} = 0,
        \end{equation*}
        and payoffs to agent $k$ in $Z$ is equivalent to their payoffs in $\hat{Z} = (\agentset, \edgeset, (S_k)_{k \in \agentset}, (\hat{A}^{kl}, \hat{A}^{lk})_{(k,l) \in \edgeset})$. In particular, for all $k \in \agentset$ and all $\x_k \in \Delta_k$
        \begin{equation*}
            \sum_{(k, l) \in \edgeset} \x_k^\top \hat{A}^{kl} \x_l = \sum_{(k, l) \in \edgeset} \x_k^\top A^{kl} \x_l
        \end{equation*}
    \end{proposition}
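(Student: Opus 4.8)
The plan is to distil a single algebraic consequence of the zero-sum identity and then cancel it with payoff-neutral reparametrisations of the edge matrices. Throughout, write $B^{kl} := A^{kl} + (A^{lk})^\top \in \R^{n_k\times n_l}$ for $(k,l)\in\edgeset$, so that $[B^{kl}]_{ij} = [A^{kl}]_{ij} + [A^{lk}]_{ji}$ is the combined payoff that $k$ and $l$ extract from their interaction. Summing the agents' payoffs and regrouping edge by edge gives $\sum_{k\in\agentset}u_k(\x) = \sum_{(k,l)\in\edgeset}\x_k^\top B^{kl}\x_l$, so the NZSG condition says exactly that this quadratic form vanishes on all of $\Delta$.

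First I would extract structure from this identity. Fixing an edge $(k,l)$ and restricting to profiles in which every agent other than $k$ and $l$ plays a fixed pure strategy reduces the identity to $\x_k^\top B^{kl}\x_l + \ell_k(\x_k) + \ell_l(\x_l) + c = 0$ with $\ell_k,\ell_l$ affine. Taking second differences in $\x_k$ and in $\x_l$ — whose increments sweep out $\ones^\perp$ in each factor — eliminates the affine and constant parts and leaves $\mathbf{u}^\top B^{kl}\mathbf{v} = 0$ for all $\mathbf{u}\perp\ones_{n_k}$, $\mathbf{v}\perp\ones_{n_l}$, which forces the additively separable form $[B^{kl}]_{ij} = \beta^{kl}_i + \beta^{lk}_j$ for suitable vectors $\beta^{kl}\in\R^{n_k}$, $\beta^{lk}\in\R^{n_l}$. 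Next, evaluating the identity at an arbitrary pure profile and regrouping the separated pieces by agent gives $\sum_{k\in\agentset}\phi_k(\sigma_k) = 0$ for every pure $\sigma$, where $\phi_k(\sigma_k) := \sum_{l:(k,l)\in\edgeset}\beta^{kl}_{\sigma_k}$; since this holds for all $\sigma$, each $\phi_k$ must be a constant $\gamma_k$ with $\sum_{k\in\agentset}\gamma_k = 0$, equivalently $\sum_{l:(k,l)\in\edgeset}\beta^{kl} = \gamma_k\ones_{n_k}$.

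Next I would convert this structure into the desired matrices. Replacing $A^{kl}$ by $\hat A^{kl} := A^{kl} + \mu^{kl}\ones_{n_l}^\top$ changes agent $k$'s payoff on edge $(k,l)$ only by the additive term $\x_k^\top\mu^{kl}$ (since $\ones_{n_l}^\top\x_l = 1$), so choosing vectors $\mu^{kl}\in\R^{n_k}$ with $\sum_{l:(k,l)\in\edgeset}\mu^{kl} = \zeros$ for every $k$ leaves every payoff function $u_k$ unchanged, exactly. I would then take $\mu^{kl} := -\beta^{kl} + (\gamma_k/d_k)\ones_{n_k}$, where $d_k$ is the degree of $k$; the relation $\sum_l\beta^{kl} = \gamma_k\ones_{n_k}$ is precisely what makes these choices meet the per-agent zero-sum constraints. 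With them, $[\hat A^{kl}]_{ij} + [\hat A^{lk}]_{ji} = [B^{kl}]_{ij} + \mu^{kl}_i + \mu^{lk}_j = \gamma_k/d_k + \gamma_l/d_l =: c_{kl}$, constant in $i$ and $j$, and $\sum_{(k,l)\in\edgeset}c_{kl} = \sum_{k\in\agentset}\gamma_k = 0$, which is the claim.

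I expect the main obstacle to be the coupling in the last step rather than any single computation: the reparametrisation budget is one vector for each ordered incident pair, and these vectors are tied together by a zero-sum constraint \emph{around each agent}, so the flattening choice $\mu^{kl} = -\beta^{kl}+\cdots$ is admissible only because $\sum_l\beta^{kl}$ is a bona fide scalar multiple of $\ones_{n_k}$ — a fact that has to be extracted from the NZSG identity in advance. The remaining points (that $\mathbf{u}^\top B^{kl}\mathbf{v}=0$ on $\ones^\perp\times\ones^\perp$ indeed pins down the separable form, and that the final equalities do not depend on the non-unique additive splitting of each $B^{kl}$) are routine.
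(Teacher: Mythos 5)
Your proof is correct, but note that the paper does not prove this proposition at all: it is imported verbatim from \cite{cai:minimax} (see also \cite{piliouras:zerosum}), so there is no internal proof to compare against. Your argument is a clean, self-contained derivation, and each step checks out. Writing $B^{kl}=A^{kl}+(A^{lk})^\top$, the second-difference step correctly kills the affine terms and yields $\mathbf{u}^\top B^{kl}\mathbf{v}=0$ on $\ones^\perp\times\ones^\perp$; this is equivalent to $P_kB^{kl}P_l=0$ for the centering projections, which indeed forces $[B^{kl}]_{ij}=\beta^{kl}_i+\beta^{lk}_j$. The pure-profile evaluation then gives $\sum_k\phi_k(\sigma_k)=0$ with the $\sigma_k$ varying independently, so each $\phi_k$ is constant, and your reparametrisation $\hat A^{kl}=A^{kl}+\mu^{kl}\ones^\top$ with $\sum_{l}\mu^{kl}=\zeros$ is exactly payoff-preserving; the bookkeeping $\sum_{(k,l)}c_{kl}=\sum_k d_k(\gamma_k/d_k)=\sum_k\gamma_k=0$ is right. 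You also correctly identify the one genuinely non-trivial point, namely that the per-agent constraint $\sum_l\beta^{kl}=\gamma_k\ones$ is what makes the flattening admissible. For comparison, the original argument in \cite{cai:minimax} reaches the same conclusion by examining unilateral deviations between pairs of pure strategies and propagating payoff-preserving transformations edge by edge; your linear-algebraic route (vanishing of the aggregate bilinear form on $\ones^\perp\times\ones^\perp$, hence rank-one-plus-rank-one structure of each $B^{kl}$) is arguably more transparent and makes the role of the simplex constraint explicit. The only cosmetic caveat is that isolated or degree-zero agents should be excluded from the $\gamma_k/d_k$ division, which is vacuous anyway since no $\mu^{kl}$ needs to be defined for them.
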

    %
    
    \paragraph{Maximum Pairwise Difference} 
    \label{par:maximum_pairwise_difference}
    
    To define `nearness' in the context of games, we require a notion of distance on the space of all games. We apply the
    widely used metric defined in \cite{candogan:nearpotential}, known as Maximum Pairwise Difference. Formally, let
    $\Gamma_1 =
    (\agentset, (S_k, A_k)_{k \in \agentset})$ and $\Gamma_2 = (\agentset, (S_k, B_k)_{k \in \agentset})$ be two games which share the same set of agents $\agentset$ and actionsets $(S_k)_{k \in \agentset}$ but differ in payoff functions. Then, the Maximum Pairwise
    Difference between $\Gamma_1$ and $\Gamma_2$ is
    \begin{equation} \tag{MPD} \label{eqn::MPD}
        d(\Gamma_1, \Gamma_2) = \max |A_k(\y_k, \x_{-k}) - A_k(\x_k, \x_{-k}) - (B_k(\y_k, \x_{-k}) - B_k(\x_k, \x_{-k}))|
    \end{equation}
    where the maximum is taken over all agents $k$, all $\x_{-k} \in \Delta_{-k}$ and all $\x_k, \y_k \in \Delta_k$. In words, (\ref{eqn::MPD}) captures the similarity between two games in terms of the capacity for any agent to improve their payoff by deviating from $\x_k$ to $\y_k$ whilst their opponents maintain their strategy $x_{-k}$.

    
    \paragraph{Q-Learning Dynamics} 
    \label{par:q_learning_dynamics}
    
    Q-Learning is the prototypical model for determining optimal policies in the face of uncertainty. In this model, each
    agent $k \in \agentset$ maintains a history of the past performance of each of their actions. This history is updated
    via the Q-update
    \begin{equation*}
        Q_{ki}(\tau + 1) = (1 - \alpha_k) Q_{ki}(\tau) + \alpha_k r_{ki}(\x_{-k}(\tau))
    \end{equation*}
    where $\tau$ denotes the current time step. $Q_{ki}(\tau)$ dentoes the \emph{Q-value} maintained by agent $k$ about the
    performance of action $i \in S_k$. In effect $Q_{ki}$ gives a discounted history of the rewards received when $i$ is
    played, with $1 - \alpha_k$ as the discount factor.
    
    Given these Q-values, each agent updates their mixed strategies according to the Boltzmann distribution, given by
    \begin{equation*}
        x_{ki}(\tau) = \frac{\exp(Q_{ki}(\tau)/T_k) }{\sum_j \exp(Q_{kj}(\tau)/T_k)}
    \end{equation*}
    in which $T_k \in [0, \infty)$ is the \emph{exploration rate} of agent $k$: low values of $T_k$ allow the agent to
    play the action(s) with the highest Q-value with a large probability, thereby exploiting their high performance. By
    contrast, higher values of $T_k$ enforce that agents play each of their strategies with a high probability, regardless
    of their Q-value.
    
    It was shown in \cite{tuyls:qlearning,sato:qlearning} that a continuous time approximation of the Q-Learning algorithm
    could be written as
    \begin{equation} \tag{QLD} \label{eqn::QLD}
        \frac{\dot{x}_{k i}}{x_{k i}}=r_{k i}\left(\mathbf{x}_{-k}\right)-\langle \mathbf{x}_k, r_k(\mathbf{x}) \rangle +T_k \sum_{j \in S_k} x_{k j} \ln \frac{x_{k j}}{x_{k i}}
    \end{equation}
    which we call the \emph{Q-Learning dynamics}. The fixed points of this dynamic coincide with the \emph{Quantal Response
    Equilibria} (QRE) of the game.
    
    \begin{definition}[Quantal Response Equilibrium (QRE)]
        A joint mixed strategy $\p \in \Delta$ is a Quantal Response Equilibrium of the game $\Gamma = (\agentset, (S_k,
        u_k)_{k \in \agentset})$ if, for all agents $k \in \agentset$, $i \in S_k$ 
        \begin{equation} \label{eqn::QRE}
         p_{ki} = \frac{\exp(r_{ki}(\p_{-k})/T_k)}{\sum_{j \in S_k} \exp(r_{kj}(\p_{-k})/T_k)}   
        \end{equation}
    \end{definition}
    
    The QRE is a well studied equilibrium concept for games of \emph{bounded rationality} \cite{mckelvey:qre}. This is seen
    in
    the fact that, in the limit $T_k \rightarrow 0$ for all $k$, (\ref{eqn::QRE}) corresponds exactly to the Nash
    Equilibrium, whereas in the limit $T_k \rightarrow \infty$ for all $k$, the QRE is the uniform distribution, i.e. each
    agent plays each action with the same probability, regardless of its past performance.
    
    \paragraph{Game Perturbations} 
    \label{par:game_perturbations} In \cite{piliouras:potential} it is shown that, for any $(T_k)_{k \in \agentset}$, the
    Q-Learning dynamics in a game $\Gamma$ is equivalent to the well studied \emph{replicator dynamics} (RD) in a perturbed
    game $\Gamma^H$. More formally, the authors show the following.
    
    \begin{lemma}[\cite{piliouras:potential}] \label{lem::QLRD}
        Consider a game $\Gamma = (\agentset, (S_k, u_k)_{k \in \agentset})$ and, for each agent $k$ let $T_k > 0$. Then (
        \ref{eqn::QLD}) can be written as
        \begin{equation}
            \frac{\dot{x}_{ki}}{x_{ki}} = r_{ki}^H(\x) - \langle \x_k, r_k^{H}(\x) \rangle,
        \end{equation}
        where $r_{ki}^H = r_{ki}(\x_{-k}) - T_k(\ln x_{ki} + 1)$. In particular, (\ref{eqn::QLD}) recovers the replicator
        dynamics in the pertubed game $\Gamma^H = (\agentset, (S_k, u^H_k)_{k \in \agentset})$ where
        \begin{equation*}
            u_k^H(\x) = \langle x_k, r_k(\x_{-k})\rangle - T_k \langle \x_k, \ln \x_k \rangle
        \end{equation*}
    \end{lemma}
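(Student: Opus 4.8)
The plan is to prove the identity by direct substitution, in two short steps. First I would confirm that the perturbed payoff $u_k^H$ induces exactly the reward $r_k^H$ claimed, so that ``replicator dynamics in $\Gamma^H$'' is a well-defined object with the stated reward. Then I would plug $r_k^H$ into the replicator equation and simplify algebraically until the entropy-correction term appearing in (\ref{eqn::QLD}) re-emerges.

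\textbf{Step 1: the reward induced by $u_k^H$.} Write $u_k^H(\x) = \sum_{j \in S_k} x_{kj}\, r_{kj}(\x_{-k}) - T_k \sum_{j \in S_k} x_{kj}\ln x_{kj}$ and differentiate with respect to $x_{ki}$. Using that (as in the network-game convention made explicit in (\ref{eqn::QLD})) the reward $r_{kj}(\x_{-k})$ depends only on the opponents' strategies and hence is independent of $x_{ki}$, the only surviving contribution from the first sum is $r_{ki}(\x_{-k})$, while differentiating the entropy term $-T_k\, x_{ki}\ln x_{ki}$ contributes $-T_k(\ln x_{ki} + 1)$. Hence $\partial u_k^H / \partial x_{ki} = r_{ki}(\x_{-k}) - T_k(\ln x_{ki} + 1) = r_{ki}^H(\x)$, so the replicator dynamics of $\Gamma^H$ are precisely $\dot{x}_{ki}/x_{ki} = r_{ki}^H(\x) - \langle \x_k, r_k^H(\x)\rangle$.

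\textbf{Step 2: matching (\ref{eqn::QLD}).} I would expand $r_{ki}^H(\x) - \langle \x_k, r_k^H(\x)\rangle$, noting $\langle \x_k, r_k^H(\x)\rangle = \langle \x_k, r_k(\x)\rangle - T_k\sum_{j} x_{kj}\ln x_{kj} - T_k$ since $\sum_{j\in S_k} x_{kj} = 1$. Substituting, the reward part collapses to $r_{ki}(\x_{-k}) - \langle \x_k, r_k(\x)\rangle$; the additive constant $-T_k$ from $r_{ki}^H$ cancels against the $+T_k$ coming from $\langle \x_k, r_k^H(\x)\rangle$; and the leftover entropy terms combine, again using $\sum_j x_{kj} = 1$ to rewrite $-T_k\ln x_{ki}$ as $-T_k\sum_j x_{kj}\ln x_{ki}$, into $T_k\sum_{j\in S_k} x_{kj}\ln(x_{kj}/x_{ki})$. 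This is exactly the right-hand side of (\ref{eqn::QLD}), which finishes the proof.

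There is no genuine obstacle here — the statement is essentially a change of variables between two equivalent descriptions of the same vector field — but the one point that needs care is the bookkeeping of the additive constant produced by differentiating the entropy. The ``$+1$'' inside $r_{ki}^H$ is not harmless pointwise; it is only after forming the population average $\langle \x_k, r_k^H(\x)\rangle$, where it is integrated against a probability vector, that it cancels and leaves the Q-Learning dynamics exactly unchanged.
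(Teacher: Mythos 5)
Your proof is correct: the computation in Step 1 correctly identifies $r_{ki}^H = \partial u_k^H/\partial x_{ki}$, and Step 2 correctly shows the additive constant $-T_k$ cancels against $+T_k \sum_j x_{kj}$ and the entropy terms recombine into $T_k \sum_j x_{kj}\ln(x_{kj}/x_{ki})$, recovering (\ref{eqn::QLD}) exactly. The paper itself gives no proof of this lemma, importing it directly from the cited reference, so your direct-substitution verification simply supplies the standard argument behind the stated equivalence.
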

    
    The perturbed game $\Gamma^H$ has the same players and action sets as $\Gamma$ but has modified utilities. The same
    perturbation maps the QRE of the game $\Gamma$ to Nash Equilibria of $\Gamma^H$ \cite{melo:qre,gemp:sample}
    
    \paragraph{Results on Game Perturbations.} 
    \label{par:results_on_game_perturbations}
    In the following, we consider how the definition of MPD relates to the perturbations between games. First, we show that the perturbation between games is distance preserving. 
    
    \begin{proposition} \label{prop::MPDperturbation}
         Let $\Gamma_1, \Gamma_2$ be games which share the same set of agents and action spaces, but differ in payoffs.
         Then,
         for any set of exploration rates $T_k > 0$, $d(\Gamma_1, \Gamma_2) = d(\Gamma^H_1, \Gamma^H_2)$.
    \end{proposition}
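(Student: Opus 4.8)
The plan is to exploit the fact that the perturbation introduced in Lemma~\ref{lem::QLRD} is purely \emph{local}: the term it adds to agent $k$'s utility, namely $-T_k\langle \x_k, \ln \x_k\rangle$, depends only on $\x_k$ --- not on the opponents' strategies $\x_{-k}$, and not on the underlying payoffs. Hence, in the pairwise-difference expression of (\ref{eqn::MPD}), this term enters identically for both games and cancels, so that the maximands defining $d(\Gamma_1^H, \Gamma_2^H)$ and $d(\Gamma_1, \Gamma_2)$ agree pointwise.

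Concretely, fix exploration rates $(T_k)_{k\in\agentset}$ and write $h_k(\x_k) := -T_k\langle \x_k, \ln \x_k\rangle$. By Lemma~\ref{lem::QLRD}, the perturbed games are $\Gamma_1^H = (\agentset, (S_k, A_k^H)_{k\in\agentset})$ and $\Gamma_2^H = (\agentset, (S_k, B_k^H)_{k\in\agentset})$ with $A_k^H(\x) = A_k(\x) + h_k(\x_k)$ and $B_k^H(\x) = B_k(\x) + h_k(\x_k)$; crucially, both are perturbed by the \emph{same} functions $h_k$. For any agent $k$, any $\x_{-k}\in\Delta_{-k}$, and any $\x_k, \y_k\in\Delta_k$, one then has
\[
A_k^H(\y_k, \x_{-k}) - A_k^H(\x_k, \x_{-k}) = \big(A_k(\y_k, \x_{-k}) - A_k(\x_k, \x_{-k})\big) + \big(h_k(\y_k) - h_k(\x_k)\big),
\]
and the analogous identity with $B$ in place of $A$. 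Subtracting the two, the $h_k(\y_k) - h_k(\x_k)$ contributions cancel, giving
\[
A_k^H(\y_k, \x_{-k}) - A_k^H(\x_k, \x_{-k}) - \big(B_k^H(\y_k, \x_{-k}) - B_k^H(\x_k, \x_{-k})\big) = A_k(\y_k, \x_{-k}) - A_k(\x_k, \x_{-k}) - \big(B_k(\y_k, \x_{-k}) - B_k(\x_k, \x_{-k})\big).
\]

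Finally, since $\Gamma_1^H$ and $\Gamma_2^H$ share the agent set and action sets of $\Gamma_1$ and $\Gamma_2$, the maximum in (\ref{eqn::MPD}) ranges over exactly the same set of tuples $(k, \x_k, \y_k, \x_{-k})$ in both instances; as the absolute values of the maximands coincide pointwise by the display above, taking the maximum yields $d(\Gamma_1, \Gamma_2) = d(\Gamma_1^H, \Gamma_2^H)$. There is essentially no obstacle here: the only point requiring care is to confirm that the entropy perturbation of Lemma~\ref{lem::QLRD} is genuinely independent of $\x_{-k}$ and of the choice of payoff matrices --- immediate from the expression $u_k^H(\x) = \langle \x_k, r_k(\x_{-k})\rangle - T_k\langle \x_k, \ln\x_k\rangle$ --- after which the claim follows by the one-line cancellation above.
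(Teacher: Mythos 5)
Your proof is correct and follows essentially the same route as the paper's: both expand the perturbed utilities, observe that the entropy term $-T_k\langle \x_k,\ln\x_k\rangle$ is identical across the two games and therefore cancels in the pairwise difference, and conclude that the maximands coincide pointwise. Your write-up is in fact slightly cleaner, as it makes explicit that the cancellation relies on the perturbation depending only on $\x_k$ and not on the payoffs.
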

    \begin{proof}
    Let $\Gamma_1 = (\agentset, (S_k, A_k)_{k \in \agentset})$ and $\Gamma_2 = (\agentset, (S_k, B_k)_{k \in \agentset})$
    \begin{align*}
            d(\Gamma_1^H, \Gamma_2^H) =& \max |A^H_k(\y_k, \x_{-k}) - A^H_k(\x_k, \x_{-k}) - (B^H_k(\y_k, \x_{-k}) - B^H_k(\x_k, \x_{-k}))| \\
            =& \max |A_k(\y_k, \x_{-k}) - T_k \langle \y_k, \ln \y_k \rangle - A^H_k(\x_k, \x_{-k}) + T_k \langle \x_k, \ln \y_k \rangle \\ &- (B_k(\y_k,
            \x_{-k}) - T_k \langle \y_k, \ln \y_k \rangle  - B_k(\x_k, \x_{-k}) + T_k \langle \x_k, \ln \y_k \rangle )| \\
            =& \max |A_k(\y_k, \x_{-k}) - A_k(\x_k, \x_{-k}) - (B_k(\y_k, \x_{-k}) - B_k(\x_k, \x_{-k}))| = d(\Gamma_1, \Gamma_2)
    \end{align*}
    \end{proof}
    
    Next, we use MPD to show that any QRE is an approximate Nash Equilibrium, with the approximation parameterised
    by the exploration rates $T_1, \ldots, T_N$.
    
    \begin{proposition}
        For any game $\Gamma = (\agentset, (S_k, A_k)_{k \in \agentset})$ and any set of exploration rates $T_1, \ldots, T_N > 0$
        \begin{equation}
            d(\Gamma, \Gamma^H) \leq (\max_k T_k) (\max_k \ln n_k)
        \end{equation}
    \end{proposition}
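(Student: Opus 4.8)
The plan is to unwind the definition \eqref{eqn::MPD} directly and exploit that the perturbation $\Gamma \mapsto \Gamma^H$ alters each agent's utility only by an additive term depending on that agent's own strategy. Write $A_k$ for the utility of agent $k$ in $\Gamma$ and $A_k^H$ for the utility in $\Gamma^H$; by Lemma~\ref{lem::QLRD} we have $A_k^H(\x_k, \x_{-k}) = A_k(\x_k, \x_{-k}) - T_k \langle \x_k, \ln \x_k \rangle$ for every $\x \in \Delta$. Substituting this into the absolute value in \eqref{eqn::MPD}, the two copies of $A_k(\y_k, \x_{-k})$ cancel, as do the two copies of $A_k(\x_k, \x_{-k})$, leaving
\[
A_k(\y_k, \x_{-k}) - A_k(\x_k, \x_{-k}) - \bigl( A_k^H(\y_k, \x_{-k}) - A_k^H(\x_k, \x_{-k}) \bigr) = T_k \bigl( \langle \y_k, \ln \y_k \rangle - \langle \x_k, \ln \x_k \rangle \bigr).
\]
Thus the task reduces to bounding $T_k | \langle \y_k, \ln \y_k \rangle - \langle \x_k, \ln \x_k \rangle |$ uniformly over $\x_k, \y_k \in \Delta_k$ (the dependence on $\x_{-k}$ has disappeared).

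Next I would observe that $- \langle \x_k, \ln \x_k \rangle$ is precisely the Shannon entropy $H(\x_k)$ of the distribution $\x_k \in \Delta_k$, under the usual convention $0 \ln 0 = 0$ on the boundary of the simplex. Standard properties of entropy give $0 \le H(\x_k) \le \ln n_k$, the lower value attained at the vertices of $\Delta_k$ and the upper value at the uniform distribution. Hence both $\langle \y_k, \ln \y_k \rangle$ and $\langle \x_k, \ln \x_k \rangle$ lie in $[-\ln n_k, 0]$, so their difference lies in $[-\ln n_k, \ln n_k]$; that is, $| \langle \y_k, \ln \y_k \rangle - \langle \x_k, \ln \x_k \rangle | \le \ln n_k$ for all $\x_k, \y_k \in \Delta_k$.

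Combining the two steps gives $| A_k(\y_k, \x_{-k}) - A_k(\x_k, \x_{-k}) - ( A_k^H(\y_k, \x_{-k}) - A_k^H(\x_k, \x_{-k}) ) | \le T_k \ln n_k$ for each $k$, and taking the maximum over $k \in \agentset$ as required by \eqref{eqn::MPD} yields $d(\Gamma, \Gamma^H) \le \max_k T_k \ln n_k \le (\max_k T_k)(\max_k \ln n_k)$. I do not expect a real obstacle: the only point needing slight care is pinning down the exact range $[0, \ln n_k]$ of the negentropy term (and the behaviour of $x \ln x$ at the simplex boundary), which is exactly what makes the estimate tight; the rest is cancellation followed by a maximum.
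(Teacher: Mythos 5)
Your proposal is correct and follows essentially the same route as the paper's proof: substitute $u_k^H = u_k - T_k\langle \x_k, \ln\x_k\rangle$ into the MPD definition, cancel the unperturbed utilities, and bound the remaining negentropy difference by $\ln n_k$. Your treatment is in fact slightly more explicit than the paper's, which states the final inequality without spelling out the $[-\ln n_k, 0]$ range of $\langle \x_k, \ln\x_k\rangle$.
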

    \begin{proof}
    \begin{align*}
           d(\Gamma, \Gamma^H) &= \max |u_k(\y_k, \x_{-k}) - u_k(\x_k, \x_{-k}) - (u^H_k(\y_k, \x_{-k}) - u^H_k(\x_k, \x_{-k}))| \\
           &= \max |u_k(\y_k, \x_{-k}) - u_k(\x_k, \x_{-k}) - (u_k(\y_k, \x_{-k}) - T_k \y_k^\top \ln \y_k - u_k(\x_k, \x_{-k}) + T_k \x_k^\top \ln \x_k)| \\
           &= \max |T_k \, (\x_k^\top \ln \x_k - \y_k^\top \ln \y_k)| \\
           &\leq (\max_k T_k) (\max_k \ln n_k)
    \end{align*}
    \end{proof}
    To show that a QRE is an approximate NE we apply the following.
    \begin{proposition}
        Let $\Gamma_1, \Gamma_2$ be games which share the same set of agents and action spaces, but differ in payoffs.
        Let $\delta = d(\Gamma_1, \Gamma_2)$. Then if $\NE \in \Delta$ is a Nash Equilibrium of $\Gamma_2$, it is a
        $\delta$-approximate Nash Equilibrium of $\Gamma_1$.
    \end{proposition}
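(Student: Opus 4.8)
The plan is to unwind the definitions of Nash Equilibrium and of (\ref{eqn::MPD}) directly; no fixed-point or continuity machinery is needed. First I would fix an arbitrary agent $k \in \agentset$ and an arbitrary deviation $\y_k \in \Delta_k$, and specialise the maximum appearing in (\ref{eqn::MPD}) by the choice $\x_{-k} = \NE_{-k}$, $\x_k = \NE_k$, with $\y_k$ as the deviating strategy. Since $\delta = d(\Gamma_1, \Gamma_2)$ is the maximum over all such choices, this particular choice already yields
\begin{equation*}
    |A_k(\y_k, \NE_{-k}) - A_k(\NE_k, \NE_{-k}) - (B_k(\y_k, \NE_{-k}) - B_k(\NE_k, \NE_{-k}))| \leq \delta .
\end{equation*}

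Next I would discard the absolute value on one side to get
\begin{equation*}
    A_k(\y_k, \NE_{-k}) - A_k(\NE_k, \NE_{-k}) \leq \bigl(B_k(\y_k, \NE_{-k}) - B_k(\NE_k, \NE_{-k})\bigr) + \delta ,
\end{equation*}
and then invoke the hypothesis that $\NE$ is a Nash Equilibrium of $\Gamma_2$: no agent can profitably deviate against $\NE_{-k}$, so $B_k(\y_k, \NE_{-k}) - B_k(\NE_k, \NE_{-k}) \leq 0$. Substituting this bound gives $A_k(\y_k, \NE_{-k}) - A_k(\NE_k, \NE_{-k}) \leq \delta$, equivalently $A_k(\NE_k, \NE_{-k}) \geq A_k(\y_k, \NE_{-k}) - \delta$. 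Since $k$ and $\y_k$ were arbitrary, this is precisely the assertion that $\NE$ is a $\delta$-approximate Nash Equilibrium of $\Gamma_1$ in the additive sense.

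There is essentially no hard step: once the correct substitution into (\ref{eqn::MPD}) is made, the argument is a single line. The only points requiring care are bookkeeping — ensuring that it is the deviating player's own strategy that is varied in (\ref{eqn::MPD}) while all opponents remain at the profile $\NE_{-k}$ — and being explicit about which notion of approximate equilibrium is used, namely the additive one, so that $\delta$ enters as additive slack rather than multiplicatively. A multiplicative formulation would in addition require a lower bound on the equilibrium payoff $A_k(\NE_k, \NE_{-k})$, which is unavailable in general, so the additive statement is the natural one to prove here.
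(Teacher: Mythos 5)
Your proposal is correct and follows essentially the same route as the paper's own proof: specialise the maximum in (\ref{eqn::MPD}) at $\x_k = \NE_k$, $\x_{-k} = \NE_{-k}$, drop the absolute value, and use the Nash property of $\NE$ in $\Gamma_2$ to kill the $B_k$ difference, leaving additive slack $\delta$. Your closing remark on the additive versus multiplicative notion of approximate equilibrium is a sensible clarification but does not change the argument.
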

    \begin{proof}
    Let $\Gamma_1 = (\agentset, (S_k, A_k)_{k \in \agentset})$ and $\Gamma_2 = (\agentset, (S_k, B_k)_{k \in \agentset})$. Then by the definition of $d(\Gamma_1, \Gamma_2)$, for all $k \in \agentset$, $\x_k, \y_k \in \Delta_k$, $\x_{-k} \in \Delta_{-k}$
        \begin{align*}
            |A_k(\y_k, \x_{-k}) - A_k(\x_k, \x_{-k}) - (B_k(\y_k, \x_{-k}) - B_k(\x_k, \x_{-k}))| &\leq \delta \\
        \end{align*}
    In particular, let us choose $\x_k, \x_{-k}$ as the Nash Equilibrium $\NE_k$ of $\Gamma_2$ so that $B_k(\y_k, \NE_{-k}) - B_k(\NE_k, \NE_{-k}) \leq 0$. Then, for all $\y_k$
        \begin{align*}
            A_k(\y_k, \NE_{-k}) - A_k(\NE_k, \NE_{-k}) &- (B_k(\y_k, \NE_{-k}) - B_k(\NE_k, \NE_{-k})) \leq \delta \\
            A_k(\y_k, \NE_{-k}) - A_k(\NE_k, \NE_{-k}) &\leq \delta + B_k(\y_k, \NE_{-k}) - B_k(\NE_k, \NE_{-k})\\
            A_k(\y_k, \NE_{-k}) - A_k(\NE_k, \NE_{-k}) &\leq \delta \\
            A_k(\y_k, \NE_{-k})  &\leq \delta + A_k(\NE_k, \NE_{-k})
        \end{align*}
    This forms the definition of a $\delta-$approximate NE in $\Gamma_1$
    \end{proof}
    Putting the above together immediately yields the following.
    \begin{corollary}
        For any game $\Gamma$ and any set of exploration rates $T_1, \ldots, T_N > 0$, the QRE of $\Gamma$ is a $(\max_k T_k) (\max_k \ln n_k)$-approximate Nash
        Equilibrium.
    \end{corollary}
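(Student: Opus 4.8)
The plan is to chain together the two preceding propositions with the perturbation lemma, so the argument is essentially a substitution. First I would let $\p \in \Delta$ be the QRE of $\Gamma$ associated with the exploration rates $T_1, \ldots, T_N$. By Lemma~\ref{lem::QLRD} and the remark following it (see also \cite{melo:qre,gemp:sample}), the perturbation map $\Gamma \mapsto \Gamma^H$ sends the QRE of $\Gamma$ to a Nash Equilibrium of $\Gamma^H$; hence $\p$ is a Nash Equilibrium of $\Gamma^H$.

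Next I would apply the preceding proposition on approximate equilibria with $\Gamma_1 = \Gamma$ and $\Gamma_2 = \Gamma^H$. Writing $\delta := d(\Gamma, \Gamma^H)$, that proposition gives that the Nash Equilibrium $\p$ of $\Gamma^H$ is a $\delta$-approximate Nash Equilibrium of $\Gamma$. Finally, the proposition bounding the distance from a game to its perturbation states $\delta = d(\Gamma, \Gamma^H) \leq (\max_k T_k)(\max_k \ln n_k)$, and substituting this bound for $\delta$ yields the claim.

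I do not expect any genuine obstacle: all of the work has already been done in the three statements immediately above the corollary. The only point that warrants a moment of care is the opening step, which relies on the externally known correspondence between fixed points of (\ref{eqn::QLD}) --- equivalently, QRE of $\Gamma$ --- and Nash Equilibria of the perturbed game $\Gamma^H$. Once that is granted, the remainder is a direct composition of the monotone ``Nash $\Rightarrow$ $\delta$-approximate-Nash under MPD'' estimate with the entropy bound $|\x_k^\top \ln \x_k - \y_k^\top \ln \y_k| \leq \ln n_k$ already exploited in the proof of the distance bound $d(\Gamma, \Gamma^H) \leq (\max_k T_k)(\max_k \ln n_k)$ above.
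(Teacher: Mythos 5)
Your proposal is correct and follows exactly the route the paper intends: the paper gives no separate argument beyond ``putting the above together,'' which is precisely your chaining of the QRE-to-Nash correspondence for $\Gamma^H$, the $\delta$-approximate-Nash proposition, and the bound $d(\Gamma, \Gamma^H) \leq (\max_k T_k)(\max_k \ln n_k)$.
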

    
    

    \section{Near Network Zero-Sum Games} 
    \label{sec:near_network_zero_sum_games}
    
    Our main results concern the competitive setting. We first show that, in near-NZSG (\ref{eqn::QLD}) converges to a set
    around the QRE of the NZSG. This determines approximate convergence behaviour when the game is perturbed away from the
    NZSG assumption. We follow this with a scheme to determine, for any network game (not necessarily zero sum), the nearest
    NZSG. Using these results together provides a method to determine approximate convergence behaviour for arbitrary
    competitive network games.
    
    \subsection{Approximate Convergence} 
    \label{sub:approximate_convergence}
    
    To define the convergence of the Q-Learning dynamic we need a measure of distance. To this end, we use the \emph{Kullback-Leibler} (KL)
    divergence. 
    
    \begin{definition}
        The Kullback-Leibler Divergence between a set of joint strategies $\x, \y \in \Delta$ is given by
        \begin{equation}
            D_{KL}(\y || \x) = \sum_{k} D_{KL}(\y_k || \x_k) = \sum_k \sum_{i} y_{ki} \ln \frac{y_{ki}}{x_{ki}}
        \end{equation}
    \end{definition}
    
    Notice that the KL-Divergence does not formally define a metric as it is not symmetric (i.e. in general $D_{KL}(\y || \x) \neq 
    D_{KL}(\x || \y)$). Rather, the KL-Divergence can be thought of as measuring the overlap between probability distributions $\y$
    $\x$. The key point which we will use in our main theorem is that $D_{KL}(\y || \x)$ is zero if and only if $\x = \y$ and is
    positive everywhere else. 
    
    \begin{theorem} \label{thm::NZSGConv}
        Let $Z = (\agentset, \edgeset, (S_k)_{k \in \agentset}, (A^{kl}, A^{lk})_{(k,
    l) \in \edgeset})$ be a network zero sum game which, for some $T_1, \ldots, T_N > 0$, has unique QRE $\p \in \Delta$. Let $G = (\agentset, (S_k,
    u_k)_{k \in \agentset})$ be a game such that $d(Z, G) < \delta$ for some $\delta > 0$. Then, if $\x(t)$ is a trajectory of
    mixed strategies generated by running (\ref{eqn::QLD}) on $G$, 
    \begin{equation*}
        \lim_{t \rightarrow \infty} D_{KL}(\p || \x(t)) \leq \frac{N \delta}{T_{min}}
    \end{equation*}
    where $T_{\min} = \min_k T_k$.
    
    \end{theorem}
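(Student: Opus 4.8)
The plan is to use the Kullback--Leibler divergence to the target, $V(\x) := D_{KL}(\p \| \x) = \sum_k D_{KL}(\p_k\|\x_k)$, as a Lyapunov-type function and to establish the differential inequality $\dot V \le -T_{\min}\, V + N\delta$ along any interior trajectory of (\ref{eqn::QLD}) on $G$. Gr\"onwall's inequality then gives $V(\x(t)) \le e^{-T_{\min} t}\, V(\x(0)) + \tfrac{N\delta}{T_{\min}}(1 - e^{-T_{\min} t})$, and letting $t\to\infty$ yields the stated bound. Here $\p$ is fully mixed, being a QRE, so $V$ is finite and smooth on the interior of $\Delta$, and interior initial conditions remain interior under (\ref{eqn::QLD}); the limit in the statement should be read as a limit superior, since the dynamics need not converge.

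First I would differentiate $V$. By Lemma~\ref{lem::QLRD}, running (\ref{eqn::QLD}) on $G$ is the replicator dynamics in the perturbed game $G^H$, so $\dot V = -\sum_k \langle \p_k - \x_k,\ r_k^{H}(\x)\rangle$, where $r_k^{H}(\x) = r_k^{G}(\x_{-k}) - T_k(\ln\x_k + \ones)$ are the perturbed rewards of $G$. I then write $r_k^{H}(\x) = r_k^{H,Z}(\x) + \big(r_k^{G}(\x_{-k}) - r_k^{Z}(\x_{-k})\big)$, where $r_k^{H,Z}$ are the perturbed rewards of $Z$; the entropy terms are common to both games and cancel in the difference. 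This splits $\dot V = \dot V_Z + E$ with $\dot V_Z := -\sum_k\langle\p_k-\x_k,\ r_k^{H,Z}(\x)\rangle$ and $E := -\sum_k\langle\p_k-\x_k,\ r_k^{G}(\x_{-k}) - r_k^{Z}(\x_{-k})\rangle$.

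The heart of the argument is to show $\dot V_Z \le -T_{\min}\, V$. By Proposition~\ref{prop::cai-nzsg} I may assume $Z$ is pairwise constant-sum, i.e. $[A^{kl}]_{ij} + [A^{lk}]_{ji} = c_{kl}$ for every edge. Since $\p$ is the QRE of $Z$, the logit defining relation gives $r_k^{Z}(\p_{-k}) = T_k\ln\p_k + c_k\ones$ for constants $c_k$. Substituting $r_k^{H,Z}(\x) = r_k^{Z}(\x_{-k}) - T_k(\ln\x_k+\ones)$, adding and subtracting $r_k^{Z}(\p_{-k})$, and regrouping, the purely bilinear remainder is $\sum_{(k,l)\in\edgeset}(\p_k-\x_k)^\top\big(A^{kl} + (A^{lk})^\top\big)(\p_l-\x_l)$, which vanishes: the constant-sum condition makes $A^{kl} + (A^{lk})^\top = c_{kl}\ones\ones^\top$, while $\p_k - \x_k$ and $\p_l - \x_l$ are orthogonal to $\ones$. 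Collecting the surviving terms, and checking that the additive constants $c_k$, $c_{kl}$ cancel, gives the clean identity $\dot V_Z = -\sum_k T_k\big(D_{KL}(\x_k\|\p_k) + D_{KL}(\p_k\|\x_k)\big)$, which is at most $-T_{\min}\sum_k D_{KL}(\p_k\|\x_k) = -T_{\min}\, V$. This is essentially the known global stability of the QRE in network zero-sum games, and the bookkeeping needed to make the cross-terms and the constants disappear is the step I expect to be the most delicate.

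Finally I would bound the error term $E$. Because payoffs are linear in each player's own mixed strategy, $u_k(\y_k,\x_{-k}) - u_k(\x_k,\x_{-k}) = \langle \y_k - \x_k,\ r_k(\x_{-k})\rangle$ in both $Z$ and $G$, so the definition (\ref{eqn::MPD}) together with $d(Z,G)<\delta$ yields $\big|\langle\y_k - \x_k,\ r_k^{G}(\x_{-k}) - r_k^{Z}(\x_{-k})\rangle\big| \le \delta$ for every $k$ and all $\x_k,\y_k \in \Delta_k$, $\x_{-k}\in\Delta_{-k}$. Taking $\y_k = \p_k$ and $\x_k = \x_k(t)$ bounds each of the $N$ summands of $E$ by $\delta$, so $|E| \le N\delta$. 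Combining with the previous step gives $\dot V \le -T_{\min}\, V + N\delta$, and the Gr\"onwall estimate stated at the outset completes the proof.
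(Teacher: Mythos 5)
Your argument is correct, and its core computation is the same as the paper's: you differentiate $D_{KL}(\p\|\x(t))$ along the perturbed replicator flow of Lemma~\ref{lem::QLRD}, split the reward into the NZSG reward plus a deviation, kill the bilinear cross term $\sum_{(k,l)\in\edgeset}(\p_k-\x_k)^\top(A^{kl}+(A^{lk})^\top)(\p_l-\x_l)$ via the pairwise constant-sum normal form of Proposition~\ref{prop::cai-nzsg} and orthogonality to $\ones$, use the QRE fixed-point relation $z_k(\p_{-k})=T_k\ln\p_k+c_k\ones$ to produce the term $-\sum_k T_k\bigl(D_{KL}(\p_k\|\x_k)+D_{KL}(\x_k\|\p_k)\bigr)$, and bound the deviation by $N\delta$ exactly as in Proposition~\ref{prop::rewarddist}. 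Where you genuinely depart from the paper is the endgame. The paper stops at the implication ``$D_{KL}(\p\|\x)+D_{KL}(\x\|\p)>N\delta/T_{\min}$ implies $\frac{d}{dt}D_{KL}(\p\|\x)<0$'' (Lemma~\ref{lem::KLNZSG}) and then runs a trapping-region argument on the sublevel set $S$ of the \emph{symmetrised} divergence. You instead discard the nonnegative reverse-KL term to get the linear differential inequality $\dot V\le -T_{\min}V+N\delta$ for $V=D_{KL}(\p\|\x)$ alone and invoke Gr\"onwall. This buys two things: an explicit exponential rate $e^{-T_{\min}t}$ toward the limiting neighbourhood, and a cleaner justification of the asymptotic bound --- the paper's claim that the trajectory ``reaches $S$ in finite time'' does not follow from strict decrease and boundedness below alone (the derivative could vanish as the boundary of $S$ is approached), whereas the Gr\"onwall estimate makes the $\limsup$ bound immediate. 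Your reading of the conclusion as a $\limsup$, and your remark that interior initial conditions stay interior so that $V$ is finite along the trajectory, are both appropriate and are points the paper leaves implicit.
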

    
    Theorem \ref{thm::NZSGConv} provides a method by which the behaviour of Q-Learning dynamics can be understood even if
    game is slightly perturbed away from a NZSG. It is important to note that the approximate behaviour is also governed by
    choice of exploration rate $T_k$; in particular, the region to which (\ref{eqn::QLD}) converges decreases as $T_{\min}$
    increases. This can be explained as follows: as the exploration rate increases, each agent places less importance on the
    rewards that each action produces when updating their mixed distribution. Therefore, perturbations away from the NZSG
    condition are not felt as strongly as they would be if the exploration rate were low.
    
    In proving Theorem \ref{thm::NZSGConv}, we begin with the following proposition.
    
    \begin{proposition} \label{prop::rewarddist}
        Let $\Gamma_1 = (\agentset, (S_k, A_k)_{k \in \agentset})$ and $\Gamma_2 = (\agentset, (S_k, B_k)_{k \in
        \agentset})$ and let
        $\delta = d(\Gamma_1, \Gamma_2)$. Then, for any $k \in \agentset$ and $\x_k, \y_k \in \Delta_k$, $\x_{-k} \in \Delta_{-k}$ and any $T_k \geq 0$
        \begin{equation*}
            (\x_k - \y_k)^\top \left[a_k(\x_{-k}) - T_k \ln \x_k \right] \leq \delta + (\x_k - \y_k)^\top \left[b_k(\x_{-k}) - T_k \ln \x_k \right]
        \end{equation*}
        where $a_{ki}(\x) = \frac{\partial A_k (\x)}{\partial x_{ki}}$, $b_{ki}(\x) = \frac{\partial B_k (\x)}{\partial x_
        {ki}}$ define the rewards in $\Gamma_1$ and $\Gamma_2$ respectively. 
    \end{proposition}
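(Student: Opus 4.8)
The plan is first to strip away the exploration terms. The vector $-T_k \ln \x_k$ enters both sides of the claimed inequality in exactly the same way (as $(\x_k - \y_k)^\top(-T_k \ln \x_k)$), so it cancels, and it suffices to prove
\[
(\x_k - \y_k)^\top \bigl( a_k(\x_{-k}) - b_k(\x_{-k}) \bigr) \leq \delta .
\]
Thus the exploration rate $T_k$ plays no role in the estimate itself; it appears in the statement only so that the two sides already match the perturbed rewards $r^H$ of Lemma~\ref{lem::QLRD}, which is the form needed when this proposition is later applied inside the proof of Theorem~\ref{thm::NZSGConv}.

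For the reduced inequality I would use that network payoffs are linear in a player's own mixed strategy. Since $A_k(\cdot, \x_{-k})$ and $B_k(\cdot, \x_{-k})$ are linear with gradients $a_k(\x_{-k})$ and $b_k(\x_{-k})$ (gradients that do not depend on $\x_k$), we have
\[
A_k(\y_k, \x_{-k}) - A_k(\x_k, \x_{-k}) = (\y_k - \x_k)^\top a_k(\x_{-k}), \qquad B_k(\y_k, \x_{-k}) - B_k(\x_k, \x_{-k}) = (\y_k - \x_k)^\top b_k(\x_{-k}).
\]
Substituting both identities into the definition~(\ref{eqn::MPD}) of $d(\Gamma_1, \Gamma_2)$ and using $\delta = d(\Gamma_1, \Gamma_2)$ gives
\[
\bigl| (\y_k - \x_k)^\top \bigl( a_k(\x_{-k}) - b_k(\x_{-k}) \bigr) \bigr| \leq \delta
\]
for every agent $k$, every $\x_{-k} \in \Delta_{-k}$ and every $\x_k, \y_k \in \Delta_k$. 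Since $|{-w}| = |w|$, this is the same as $\bigl| (\x_k - \y_k)^\top ( a_k(\x_{-k}) - b_k(\x_{-k}) ) \bigr| \leq \delta$, and dropping one side of the absolute value yields the reduced inequality. Adding the common term $-T_k(\x_k - \y_k)^\top \ln \x_k$ back to both sides recovers the statement.

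The argument is essentially bookkeeping with the definitions, so I do not anticipate a genuine obstacle. The only point that needs care is the pair of identities in the second paragraph: they require the payoff functions to be linear in a player's own strategy, which is the standing convention established in the preliminaries (where $u_k(\x) = \langle \x_k, r_k(\x) \rangle$ with $r_{ki} = \partial u_k / \partial x_{ki}$) and which is what guarantees that the gradients $a_k, b_k$ depend only on $\x_{-k}$, so that the substitution into~(\ref{eqn::MPD}) is valid.
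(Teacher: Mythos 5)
Your proposal is correct and follows essentially the same route as the paper's proof: both use the linearity of the payoff in a player's own strategy to turn the MPD bound into $\left|(\x_k - \y_k)^\top(a_k(\x_{-k}) - b_k(\x_{-k}))\right| \leq \delta$, drop the absolute value, and observe that the $-T_k \ln \x_k$ term contributes identically to both sides. The only difference is cosmetic ordering (you cancel the exploration term first, the paper appends it last), and your explicit handling of the sign under the absolute value is a small point the paper leaves implicit.
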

    
    \begin{proof}
    Starting from the definition of MPD, we know that, for any $k \in \agentset$ and any $\x_k, \y_k \in \Delta_k$, $\x_{-k} \in \Delta_{-k}$
    \begin{align*}
            |A_k(\x_k, \x_{-k}) - A_k(\y_k, \x_{-k}) - (B_k(\x_k, \x_{-k}) - B_k(\y_k, \x_{-k}))| &\leq \delta \\
            |\x_k^\top a_k(\x_{-k}) - \y_k^\top a_k(\x_{-k}) - (\x_k^\top b_k(\x_{-k}) - \y_k^\top b_k(\x_{-k}))| & \leq \delta \\
            (\x_k - \y_k)^\top a_k(\x_{-k}) - (\x_k - \y_k)^\top b_k(\x_{-k}) &\leq \delta \\
            (\x_k - \y_k)^\top a_k(\x_{-k}) &\leq \delta + (\x_k - \y_k)^\top b_k(\x_{-k}) \\
            (\x_k - \y_k)^\top \left[a_k(\x_{-k}) - T_k \ln \x_k \right] &\leq \delta + (\x_k - \y_k)^\top \left[b_k(\x_{-k}) - T_k \ln \x_k \right]
    \end{align*}
    \end{proof}
    
    The following Lemma adapts the proof technique of \cite{piliouras:zerosum} in which it was shown that Q-Learning converges to a unique
    QRE in any NZSG. We extend this to consider games which do not necessarily fall into this rather restrictive class.
    \begin{lemma} \label{lem::KLNZSG}
        Let $Z$ and $G$ be games in the setting of Theorem \ref{thm::NZSGConv}. Then, if agents playing in game $G$ update
        their mixed
        strategies according to (\ref{eqn::QLD}), $D_{KL}(\p || \x(t))$ is strictly decreasing whenever $\x(t)$ satisfies
        \begin{equation} \label{eqn::KLdecreasecond}
            \frac{N \delta}{T_{\min}} < D_{KL}(\p || \x(t)) + D_{KL}(\x(t) || \p)
        \end{equation}
    \end{lemma}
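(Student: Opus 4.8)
The plan is to treat $D_{KL}(\p \,||\, \x(t))$ as a candidate Lyapunov function and show that condition (\ref{eqn::KLdecreasecond}) forces its time derivative along (\ref{eqn::QLD}) to be strictly negative. Writing $D_{KL}(\p \,||\, \x) = \sum_{k,i} p_{ki}\ln p_{ki} - \sum_{k,i} p_{ki}\ln x_{ki}$, the first sum is constant, so $\frac{d}{dt} D_{KL}(\p\,||\,\x(t)) = -\sum_{k,i} p_{ki}\,\dot x_{ki}/x_{ki}$. Substituting the replicator form of (\ref{eqn::QLD}) supplied by Lemma \ref{lem::QLRD}, namely $\dot x_{ki}/x_{ki} = r_{ki}^H(\x) - \langle \x_k, r_k^H(\x)\rangle$, and using $\sum_i p_{ki} = \sum_i x_{ki} = 1$, the expression collapses to $\sum_k \langle \x_k - \p_k, r_k^H(\x)\rangle$. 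Because the additive $T_k\mathbf{1}$ appearing inside $r_k^H$ is annihilated by $\x_k - \p_k$, this equals $\sum_k (\x_k-\p_k)^\top\big[r_k(\x_{-k}) - T_k\ln\x_k\big]$, where $r_k$ are the rewards of $G$.

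Next I would apply Proposition \ref{prop::rewarddist} with $\Gamma_1 = G$, $\Gamma_2 = Z$ and deviation target $\y_k = \p_k$, which lets me swap the $G$-rewards for the $Z$-rewards at the cost of an additive $\delta$ per agent:
\[ \frac{d}{dt}D_{KL}(\p\,||\,\x) \;\leq\; N\delta + \sum_k (\x_k-\p_k)^\top\big[b_k(\x_{-k}) - T_k\ln\x_k\big], \]
with $b_k$ the rewards in $Z$. The remaining task is to show the sum on the right equals $-\sum_k T_k\big(D_{KL}(\x_k\,||\,\p_k) + D_{KL}(\p_k\,||\,\x_k)\big)$. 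I would do this by splitting $b_k(\x_{-k}) = \big(b_k(\x_{-k}) - b_k(\p_{-k})\big) + b_k(\p_{-k})$. For the term at $\p_{-k}$, the QRE identity (\ref{eqn::QRE}) for $\p$ in $Z$, rewritten in logit form, gives $b_k(\p_{-k}) - T_k\ln\p_k = \lambda_k\mathbf{1}$ for a scalar $\lambda_k$; since $\x_k-\p_k$ kills $\mathbf{1}$, this contributes $\sum_k T_k(\x_k-\p_k)^\top(\ln\p_k - \ln\x_k) = -\sum_k T_k\big(D_{KL}(\x_k\,||\,\p_k) + D_{KL}(\p_k\,||\,\x_k)\big)$. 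For the difference term, bilinearity of the network payoffs yields $b_k(\x_{-k}) - b_k(\p_{-k}) = \sum_{l} A^{kl}(\x_l - \p_l)$, so its contribution is $\sum_{(k,l)\in\edgeset} (\x_k-\p_k)^\top\big(A^{kl} + (A^{lk})^\top\big)(\x_l-\p_l)$; by Proposition \ref{prop::cai-nzsg} the rewards, hence this quadratic form, are unchanged under the payoff-equivalent representation with $A^{kl} + (A^{lk})^\top = c_{kl}\mathbf{1}\mathbf{1}^\top$, and this vanishes against differences of probability vectors. Combining and bounding $T_k \geq T_{\min}$ gives $\frac{d}{dt}D_{KL}(\p\,||\,\x) \leq N\delta - T_{\min}\big(D_{KL}(\p\,||\,\x) + D_{KL}(\x\,||\,\p)\big)$, which is strictly negative exactly when (\ref{eqn::KLdecreasecond}) holds.

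The main obstacle is the cancellation of the bilinear term: it is the step that isolates the network zero-sum structure, and it requires care both with edge double-counting (pairing the $k$-against-$l$ and $l$-against-$k$ contributions to form $A^{kl}+(A^{lk})^\top$) and with the fact that $Z$ need only be \emph{payoff-equivalent} to a pairwise constant-sum game, so Proposition \ref{prop::cai-nzsg} must be invoked to justify that the relevant quadratic form is the one attached to the normalised matrices. A secondary point to handle carefully is that uniqueness of the QRE is assumed but enters the argument only through the fixed-point identity (\ref{eqn::QRE}), which must be recast as $b_k(\p_{-k}) - T_k\ln\p_k \propto \mathbf{1}$ before it can be used.
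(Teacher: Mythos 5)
Your proposal is correct and follows the same route as the paper: the same KL-Lyapunov derivative computation via Lemma \ref{lem::QLRD}, the same application of Proposition \ref{prop::rewarddist} to trade the rewards of $G$ for those of $Z$ at a cost of $\delta$ per agent, and the same split into a bilinear term that vanishes by the network zero-sum structure and a term equal to $-T_k\bigl(D_{KL}(\p_k||\x_k)+D_{KL}(\x_k||\p_k)\bigr)$ via the logit form of the QRE condition. The only difference is that you prove inline (the QRE identity, the symmetrised-KL identity, and the cancellation of the quadratic form using Proposition \ref{prop::cai-nzsg}) what the paper delegates to citations of Theorem 3.2, Property 1, and Lemma 4.3 of the prior work on Q-learning in NZSGs, and your treatment of those steps is sound.
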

    
    \begin{proof}
        From Lemma \ref{lem::QLRD}, we can write (\ref{eqn::QLD}) as the replicator system in the perturbed game $G^H$.
        Using this we can take the time derivative of $D_{KL}(\p_k || \x_k(t))$ giving
        \begin{align}
            \frac{d}{dt}D_{KL}(\p_k || \x_k(t)) &= \frac{d}{dt} \sum_{i} p_{ki} \ln \frac{p_{ki}}{x_{ki}(t)} \nonumber\\
            &= - \frac{d}{dt} \sum_{i} p_{ki} \ln x_{ki}(t) \nonumber \\
            &= - \sum_i p_{ki} \frac{\dot{x}_{ki}(t)}{x_{ki}(t)} \nonumber \\
            &= - \sum_i p_{ki} \left(r_{ki}^H(\x) - \langle \x_{k}, r_k^H(\x) \rangle \right) \nonumber \\
            &= (\x_k - \p_k)^\top \left[r_k(\x_{-k}) - T_k \ln \x_k \right]
        \end{align}
        in which $r_k$ denotes the rewards in the game $G$. From Proposition \ref{prop::rewarddist} it holds that
        \begin{align}
            \frac{d}{dt}D_{KL}(\p_k || \x_k(t)) &\leq \delta + (\x_k - \p_k)^\top \left[z_k(\x_{-k}) - T_k \ln \x_k \right] \nonumber
        \end{align}
        in which $\z_k$ denotes the rewards for the NZSG $Z$. We continue now along the lines of \cite{piliouras:zerosum}
        \begin{align}
            \frac{d}{dt}D_{KL}(\p_k || \x_k(t)) &\leq \delta + (\x_k - \p_k)^\top [z_k(\x_{-k}) - T_k \ln \x_k)] \nonumber\\
            &= \delta + (\x_k - \p_k)^\top [z_k(\x_{-k}) - z_k(\p_{-k}))] - T_k (\x_k - \p_k)^\top [\ln \x_k - \ln \p_k]
            \label{eqn::step1}\\
            &= \delta + (\x_k - \p_k)^\top [z_k(\x_{-k}) - z_k(\p_{-k}))] - T_k [D_{KL}(\p_k || \x_k(t)) + D_{KL}(\x_k(t) ||
            \p_k)] \label{eqn::step2}
        \end{align}
        where (\ref{eqn::step1}) follows due to \cite{piliouras:zerosum}, Theorem 3.2 and (\ref{eqn::step2}) is due to 
        \cite{piliouras:zerosum}, Property 1. Taking the sum over all $k \in \agentset$ yields
        \begin{equation*}
            \frac{d}{dt}D_{KL}(\p || \x(t)) \leq N\delta + \sum_k (\x_k - \p_k)^\top [z_k(\x_{-k}) - z_k(\p_{-k}))] -
            \sum_k T_k [D_{KL}(\p_k || \x_k(t)) + D_{KL}(\x_k(t) || \p_k)]
        \end{equation*}
        Now, under (\ref{eqn::KLdecreasecond})
        \begin{align*}
            &\frac{N \delta}{T_{\min}} < \sum_k [D_{KL}(\p_k || \x_k(t)) + D_{KL}(\x_k(t) || \p_k)] \\
            \implies& N\delta < T_{\min} \sum_k [D_{KL}(\p_k || \x_k(t)) + D_{KL}(\x_k(t) || \p_k)] \\
            \implies& N\delta - \sum_k T_k [D_{KL}(\p_k || \x_k(t)) + D_{KL}(\x_k(t) || \p_k)] < 0  
        \end{align*}
        In addition, from Lemma 4.3 of \cite{piliouras:zerosum}, it holds that
        \begin{equation*}
            \sum_k (\x_k - \p_k)^\top [z_k(\x_{-k}) - z_k(\p_{-k}))] = 0
        \end{equation*}
        Putting all this together, if (\ref{eqn::KLdecreasecond}) holds, then $\frac{d}{dt}D_{KL}(\p || \x(t))  < 0$
    \end{proof}
    
    \begin{proof}[Proof of Theorem \ref{thm::NZSGConv}]
         Define
        \begin{equation*}
            S = \left\{ \x \in \Delta \, | \, D_{KL}(\p || \x) + D_{KL}(\x || \p) \leq  \frac{N \delta}{T_{\min}}\right\}
        \end{equation*}
        \sloppy in which $T_{\min} = \min_k T_k$.
    
        For any $\x(t) \notin S$, it holds from Lemma \ref{lem::KLNZSG} that $D_{KL}(\p || \x(t))$ is strictly decreasing.
        By definition it is also bounded below by zero. It holds, then, that $\x(t)$ reaches $S$ in finite time. At this
        time, $D_{KL}(\p || \x(t)) \leq \sup_{\x \in S} D_{KL}(\p || \x) =: D_S$. Furthermore, by Lemma \ref{lem::KLNZSG},
        if $\x(t)$ leaves $S$, $D_{KL}(\p || \x(t))$ cannot increase past $D_S$. It follows, then, that $\limsup_{t
        \rightarrow \infty} D_{KL}(\p || \x(t)) \leq D_S$. Finally, we note that $D_S = \sup_{\x \in S} D_{KL}(\p || \x)
        \leq \sup_{\x \in S} D_{KL}(\p || \x) + D_{KL}(\x || \p) \leq \frac{N \delta}{T_{\min}}$.
    \end{proof}
    
    \begin{remark} It is important to note that Theorem \ref{thm::NZSGConv} makes no statement on whether Q-Learning in a near NZSG will itself converge
 to a QRE. In fact such counter-examples are demonstrated in Figure \ref{fig::Conflict_Network}, and complex behaviour is known
 to be prevalent in multi-agent learning (e.g. \cite{sanders:chaos,cheung:decomposition}). Nonetheless, Theorem 
    \ref{thm::NZSGConv} provides a complete picture on the \emph{approximate} last iterate behaviour of Q-Learning. It does this by determining a region to which Q-Learning dynamics must remain trapped, even if it does not ultimately reach a QRE within this region. This region is defined with respect to the QRE of an NZSG, which is unique and can be found by running Q-Learning.
\end{remark}
    
    
    \subsection{Finding the Closest NZSG} 
    \label{sub:finding_the_closest_nzsg}
    
    In order use Theorem \ref{thm::NZSGConv} to determine the approximate behaviour of an arbitrary competitive (but not zero sum) game, it is first
    required that we find the nearest network zero-sum game. In this section we show that this process can be solved efficiently. In particular, given
    any network game $\Gamma = (\agentset, \edgeset, (S_k)_{k \in \agentset}, (A^{kl}, A^{lk})_{(k, l) \in \edgeset})$ which is not necessarily zero sum,
    the problem of finding the `nearest' NZSG can be written as a quadratic minimisation problem with linear constraints. In doing so, the approximate
    behaviour of Q-Learning in the original game can be determined. 
    
    This formulation manipulates the result of \cite{cai:minimax}: that any NZSG is payoff equivalent to a pairwise constant-sum game, where the constants
    add to zero. More formally, this can be stated as the following proposition.
    
    \begin{proposition}[\cite{piliouras:zerosum}, \cite{cai:minimax}]
        Let $Z = (\agentset, \edgeset, (S_k)_{k \in \agentset}, (A^{kl}, A^{lk})_{(k,l) \in \edgeset})$ be a NZSG. For all $(k, l) \in \edgeset$
        there exist $(\hat{A}^{kl}, \hat{A}^{lk})$ and a constant $c_{kl} \in \R$ such that
        \begin{equation}
            [\hat{A}^{kl}]_{ij} + [\hat{A}^{lk}]_{ji} = c_{kl}, \; \forall i \in S_k, j \in S_l,
        \end{equation}
        with
        \begin{equation}
            \sum_{(k, l) \in \edgeset} c_{kl} = 0,
        \end{equation}
        and payoffs to agent $k$ in $Z$ is equivalent to their payoffs in $\hat{Z} = (\agentset, \edgeset, (S_k)_{k \in \agentset}, (\hat{A}^{kl}, \hat{A}^{lk})_{(k,l) \in \edgeset})$. In particular, for all $k \in \agentset$ and all $\x_k \in \Delta_k$
        \begin{equation}
            \sum_{(k, l) \in \edgeset} \x_k^\top \hat{A}^{kl} \x_l = \sum_{(k, l) \in \edgeset} \x_k^\top A^{kl} \x_l
        \end{equation}
    \end{proposition}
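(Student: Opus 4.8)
The plan is to pass to the \emph{symmetrised} edge matrices. For an edge $\{k,l\}\in\edgeset$ write $M^{kl} := A^{kl} + (A^{lk})^\top$, so that $M^{kl}_{ij} = [A^{kl}]_{ij} + [A^{lk}]_{ji}$ and $M^{lk} = (M^{kl})^\top$, and write $l\sim k$ for $\{k,l\}\in\edgeset$. Summing payoffs over agents, each edge contributes both of its directed terms, so $\sum_{k}u_k(\x) = \sum_{\{k,l\}\in\edgeset}\x_k^\top M^{kl}\x_l =: \Phi(\x)$, and the NZSG hypothesis is precisely $\Phi\equiv 0$ on $\Delta$. Fix a reference action $1_k\in S_k$ for every $k$. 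I aim to produce, for each edge, vectors $\mu^{kl}\in\R^{n_k}$, $\nu^{kl}\in\R^{n_l}$ and a scalar $\gamma_{kl}$ with $M^{kl}_{ij} = \mu^{kl}_i + \nu^{kl}_j + \gamma_{kl}$, normalised so that $\mu^{kl}_{1_k} = \nu^{kl}_{1_l} = 0$ (so that $\mu^{lk} = \nu^{kl}$ and $\nu^{lk} = \mu^{kl}$), and then to set $[\hat A^{kl}]_{ij} := [A^{kl}]_{ij} - \mu^{kl}_i$. This immediately gives $[\hat A^{kl}]_{ij} + [\hat A^{lk}]_{ji} = M^{kl}_{ij} - \mu^{kl}_i - \nu^{kl}_j = \gamma_{kl} =: c_{kl}$, a constant, so everything reduces to (a) showing each $M^{kl}$ is additively separable, and (b) verifying $\sum_{\{k,l\}\in\edgeset} c_{kl} = 0$ together with payoff equivalence.

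For (a), I would evaluate $\Phi$ only at pure profiles $\x = (e_{j_m})_m$, where it reads $\sum_{\{k,l\}\in\edgeset} M^{kl}_{j_k j_l} = 0$ for every choice of actions. Fixing one edge $\{k,l\}$ and freezing the actions of all other agents, the polymatrix structure guarantees that no surviving summand other than $M^{kl}_{j_k j_l}$ depends on both $j_k$ and $j_l$, since every other edge meets $\{k,l\}$ in at most one vertex. Hence $M^{kl}_{j_k j_l}$ equals minus a function of $j_k$ plus a function of $j_l$ plus a constant, which is exactly additive separability; concretely one may take $\gamma_{kl} = M^{kl}_{1_k 1_l}$, $\mu^{kl}_i = M^{kl}_{i 1_l} - \gamma_{kl}$, $\nu^{kl}_j = M^{kl}_{1_k j} - \gamma_{kl}$, the separability statement being that the residual $M^{kl}_{ij} - \mu^{kl}_i - \nu^{kl}_j - \gamma_{kl}$ vanishes.

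For (b), I would substitute the separable form back into $\Phi$: using $\ones^\top\x_m = 1$ on the simplex one has $\x_k^\top M^{kl}\x_l = \mu^{kl}\cdot\x_k + \nu^{kl}\cdot\x_l + \gamma_{kl}$, so $\Phi(\x) = \sum_k\bigl(\sum_{l\sim k}\mu^{kl}\bigr)^\top\x_k + \sum_{\{k,l\}\in\edgeset}\gamma_{kl}$. Since this affine function vanishes identically on $\Delta$, each coefficient vector $\sum_{l\sim k}\mu^{kl}$ must have all entries equal (evaluate at the vertices $e_i$ of $\Delta_k$) and the constant term must vanish; but the normalisation $\mu^{kl}_{1_k}=0$ forces the $1_k$-entry of $\sum_{l\sim k}\mu^{kl}$ to be $0$, whence $\sum_{l\sim k}\mu^{kl} = \zeros$ for every $k$ and $\sum_{\{k,l\}\in\edgeset}\gamma_{kl} = 0$. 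The first identity yields payoff equivalence, $u_k^{\hat Z}(\x) = \sum_{l\sim k}\x_k^\top\hat A^{kl}\x_l = u_k(\x) - \bigl(\sum_{l\sim k}\mu^{kl}\bigr)^\top\x_k = u_k(\x)$, and the second is exactly $\sum_{\{k,l\}} c_{kl} = 0$; combined with $[\hat A^{kl}]_{ij}+[\hat A^{lk}]_{ji} = c_{kl}$ this is the claim.

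The one genuinely load-bearing step is (a): it is where the zero-sum hypothesis is actually used, and it rests on the observation that a polymatrix interaction never couples three or more players in a single term, so the pure-profile identity $\sum_{\{k,l\}} M^{kl}_{j_k j_l} = 0$ degenerates into a two-index additive relation on each $M^{kl}$. Everything after that is routine algebra with all-ones vectors; note moreover that no connectivity assumption on $(\agentset,\edgeset)$ is needed, and an isolated agent $k$ contributes nothing to $\Phi$ and satisfies all the identities trivially.
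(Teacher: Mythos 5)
Your proof is correct. Note that the paper itself supplies no argument for this proposition --- it is imported by citation from \cite{cai:minimax} and \cite{piliouras:zerosum} --- so there is no in-paper proof to compare against; your write-up is a self-contained derivation of the cited result, and it follows what is essentially the standard route in that literature. The two load-bearing steps both check out. For (a), the clean way to see the separability you assert is the second-difference identity: subtracting the four instances of $\sum_{\{k,l\}} M^{kl}_{j_k j_l}=0$ obtained by toggling $j_k\in\{i,1_k\}$ and $j_l\in\{j,1_l\}$ with all other agents frozen kills every summand that does not depend on both indices, leaving $M^{kl}_{ij}-M^{kl}_{i1_l}-M^{kl}_{1_kj}+M^{kl}_{1_k1_l}=0$, which is exactly your explicit choice of $\mu^{kl},\nu^{kl},\gamma_{kl}$; this is where the polymatrix structure (no term couples three players) and the zero-sum hypothesis enter, as you say. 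For (b), the identification $\nu^{kl}=\mu^{lk}$ under your normalisation is what lets you regroup $\Phi$ by agent, and evaluating the resulting affine function at simplex vertices correctly forces $\sum_{l\sim k}\mu^{kl}=\zeros$ and $\sum\gamma_{kl}=0$, giving payoff equivalence and the constant-sum condition simultaneously. One immaterial notational point: the paper's $\sum_{(k,l)\in\edgeset}c_{kl}=0$ may count each edge in both orientations, whereas you sum over unordered edges; since $c_{kl}=c_{lk}$ this only changes the sum by a factor of two and the conclusion is unaffected. Your closing remarks (no connectivity needed, isolated agents trivial) are also accurate.
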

    
    As such, given the network game $\Gamma$, we can write the problem of finding the `nearest' NZSG as finding the nearest pairwise constant-sum game. This is formulated as
    
    \begin{equation} \tag{P1} \label{eqn::P1}
        \begin{cases}
            \min_{(\hat{A}^{kl}, \hat{A}^{lk}, c_{kl})_{(k, l) \in \edgeset}} &\sum_{(k, l) \in \edgeset} ||\hat{A}^{kl} - A^{kl}||_2^2 + ||\hat{A}^{lk} - A^{lk}||_2^2 \\
            \text{s.t. } & [A^{kl}]_{ij} + [A^{lk}]_{ji} = c_{kl}, \; \forall (k, l) \in \edgeset, \, \forall i \in S_k, \, j \in S_l \\
            & \sum_{(k, l) \in \edgeset} c_{kl} = 0
        \end{cases}
    \end{equation}
    
    where $A^{kl}, A^{lk}$ are the payoff matrices which define $\Gamma$. As the objective function in (\ref{eqn::P1}) is quadratic, and the constraints are
    linear, (\ref{eqn::P1}) is a quadratic optimisation problem which can be solved efficiently. 
    
    To connect the minimisation of the $2-$norm to (\ref{eqn::MPD}), we 
    have the following results.
    \begin{proposition} \label{prop::reward-MPD}
        Suppose $\Gamma_1 = (\agentset, (S_k, A_k)_{k \in \agentset})$, $\Gamma_2 = (\agentset, (S_k, B_k)_{k \in \agentset})$
        are games which have rewards $a_{ki}(\x_{-k}) = \partial A_{ki}(\x)/\partial x_{ki}$ and $b_{ki}(\x_{-k}) = \partial B_{ki}(\x)/\partial x_{ki}$ respectively. Suppose also that, for all $k \in \agentset$, $i \in S_k$ and $\x_{-k} \in \Delta_{-k}$,
        \begin{equation} \label{eqn::reward-MPD}
            \left| a_{ki}(\x_{-k}) - b_{ki}(\x_{-k}) \right| \leq \frac{\delta}{2 n_k}
        \end{equation}
        where $\delta > 0$. Then $d(\Gamma_1, \Gamma_2) \leq \delta$
    \end{proposition}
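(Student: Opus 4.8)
The plan is to unwind the definition (\ref{eqn::MPD}) of the Maximum Pairwise Difference, use the fact that in a network game each agent's payoff is linear in their own strategy so that the payoff difference appearing in (\ref{eqn::MPD}) becomes an inner product against the difference of reward vectors, and then close the estimate with H\"older's inequality together with the coordinatewise hypothesis (\ref{eqn::reward-MPD}).

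Concretely, the first step is to observe, exactly as in the opening lines of the proof of Proposition \ref{prop::rewarddist}, that for every agent $k \in \agentset$, every $\x_{-k} \in \Delta_{-k}$ and every $\x_k, \y_k \in \Delta_k$,
\[
A_k(\y_k, \x_{-k}) - A_k(\x_k, \x_{-k}) = (\y_k - \x_k)^\top a_k(\x_{-k}),
\]
and likewise $B_k(\y_k, \x_{-k}) - B_k(\x_k, \x_{-k}) = (\y_k - \x_k)^\top b_k(\x_{-k})$, because the reward vectors $a_k(\x_{-k})$ and $b_k(\x_{-k})$ do not depend on the acting agent's own mixed strategy. Substituting both identities into (\ref{eqn::MPD}) collapses it to
\[
d(\Gamma_1, \Gamma_2) = \max_{k,\ \x_{-k},\ \x_k,\ \y_k} \big| (\y_k - \x_k)^\top \big( a_k(\x_{-k}) - b_k(\x_{-k}) \big) \big|.
\]

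The second step is to bound the inner product inside the maximum for an arbitrary fixed choice of $k, \x_{-k}, \x_k, \y_k$. By the triangle inequality it is at most $\sum_{i \in S_k} |y_{ki} - x_{ki}| \cdot |a_{ki}(\x_{-k}) - b_{ki}(\x_{-k})|$, which by (\ref{eqn::reward-MPD}) is at most $\tfrac{\delta}{2 n_k} \sum_{i \in S_k} |y_{ki} - x_{ki}|$. Since $x_{ki}, y_{ki} \in [0,1]$, each of the $n_k$ terms $|y_{ki} - x_{ki}|$ is bounded by $1$, so the sum is at most $n_k$ (one could equally note $\|\y_k - \x_k\|_1 \le 2$, as both vectors lie in $\Delta_k$); in either case the inner product is bounded by $\delta / 2 \le \delta$. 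As this bound is uniform over the choice of $k, \x_{-k}, \x_k, \y_k$, taking the maximum yields $d(\Gamma_1, \Gamma_2) \le \delta$.

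I do not expect a genuine obstacle here. The only step that needs care is the reduction in the first display, i.e.\ writing the payoff difference as $(\y_k - \x_k)^\top a_k(\x_{-k})$: this relies on a network-game payoff being linear in the acting agent's strategy with gradient equal to its reward vector, a structural fact that is already used implicitly in the proof of Proposition \ref{prop::rewarddist}. Everything else is a routine application of H\"older's inequality, and the constant $2 n_k$ in (\ref{eqn::reward-MPD}) is chosen comfortably enough that the final estimate comes out at most $\delta$.
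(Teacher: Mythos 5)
Your proof is correct and follows essentially the same route as the paper's: rewrite the payoff differences as inner products with the reward vectors (using linearity of $A_k$, $B_k$ in the acting agent's own strategy) and then bound coordinatewise via (\ref{eqn::reward-MPD}). The only cosmetic difference is that you apply the triangle inequality to $\y_k - \x_k$ jointly rather than to $\y_k$ and $\x_k$ separately, which in fact gives the marginally sharper bound $\delta/2$.
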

    
    \begin{proof}
        For any agent $k$, any $\x_k, \y_k \in \Delta_k$ and any $\x_{-k} \in \Delta_{-k}$ it holds that
        \begin{align*}
            &|A_k(\y_k, \x_{-k}) - A_k(\x_k, \x_{-k}) - (B_k(\y_k, \x_{-k}) - B_k(\x_k, \x_{-k}))| \\
            =& |\y_k^\top a_k(\x_{-k}) -\x_k^\top a_k(\x_{-k}) - (\y_k^\top b_k(\x_{-k}) -\x_k^\top b_k(\x_{-k}))| \\
            \leq& |\y_k^\top (a_k(\x_{-k}) - b_k(\x_{-k})) -\x_k^\top (a_k(\x_{-k}) - b_k(\x_{-k}))| \\
            =& \left| \sum_i y_{ki} (a_{ki}(\x_{-k}) - b_{ki}(\x_{-k})) \right| + \left| \sum_i x_{ki} (a_{ki}(\x_{-k}) - b_{ki}(\x_{-k})) \right| \\
            \leq& \sum_i |y_{ki} (a_{ki}(\x_{-k}) - b_{ki}(\x_{-k}))| + \sum_i |x_{ki} (a_{ki}(\x_{-k}) - b_{ki}(\x_{-k}))| \\
            \leq& \sum_i |a_{ki}(\x_{-k}) - b_{ki}(\x_{-k})| + \sum_i |a_{ki}(\x_{-k}) - b_{ki}(\x_{-k})| \leq \delta
        \end{align*}
        where the final inequality holds due to (\ref{eqn::reward-MPD}).
    \end{proof}
    
    From Proposition~\ref{prop::reward-MPD} we immediately obtain the following corollary for the particular case of network games.
    
    \begin{corollary} \label{corr::abs-distance}
        Suppose that, in the setting of Proposition \ref{prop::rewarddist}, $\Gamma_1$ and $\Gamma_2$ are network games whose rewards are defined through
        the payoff matrices $(A^{kl}, A^{lk})_{(k, l) \in \edgeset}$, $(B^{kl}, B^{lk})_{(k, l) \in \edgeset}$ respectively. Suppose also that, for all $
        (k, l) \in \edgeset$, $i \in S_k$ and $j \in S_l$
        \begin{equation}
            \left| (A^{kl})_{ij} - (B^{kl})_{ij} \right| \leq \frac{\delta}{2 n_k \sum_{(k, l) \in \edgeset} n_l}
        \end{equation}
        where $\delta > 0$. Then $d(\Gamma_1, \Gamma_2) \leq \delta$
    \end{corollary}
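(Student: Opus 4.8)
The plan is to deduce this from Proposition~\ref{prop::reward-MPD} by controlling the per-action reward discrepancy between the two network games. Recall that in a network game the utility of agent $k$ is $A_k(\x) = \sum_{(k,l)\in\edgeset}\langle \x_k, A^{kl}\x_l\rangle$, so the reward for action $i \in S_k$ is
\begin{equation*}
    a_{ki}(\x_{-k}) = \frac{\partial A_k(\x)}{\partial x_{ki}} = \sum_{(k,l)\in\edgeset}(A^{kl}\x_l)_i = \sum_{(k,l)\in\edgeset}\sum_{j\in S_l}(A^{kl})_{ij}\,x_{lj},
\end{equation*}
and analogously for $b_{ki}$ in terms of the matrices $B^{kl}$. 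Thus the whole statement is really a statement about how closeness of the payoff matrices entrywise forces closeness of the induced rewards.

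First I would subtract the two reward expressions, apply the triangle inequality, and then insert the hypothesis. Writing things out,
\begin{align*}
    |a_{ki}(\x_{-k}) - b_{ki}(\x_{-k})|
    &= \left| \sum_{(k,l)\in\edgeset}\sum_{j\in S_l}\big((A^{kl})_{ij} - (B^{kl})_{ij}\big)\,x_{lj}\right| \\
    &\leq \sum_{(k,l)\in\edgeset}\sum_{j\in S_l}\big|(A^{kl})_{ij} - (B^{kl})_{ij}\big|\,x_{lj} \\
    &\leq \frac{\delta}{2n_k\sum_{(k,l)\in\edgeset}n_l}\sum_{(k,l)\in\edgeset}\sum_{j\in S_l}x_{lj},
\end{align*}
and then bounding $x_{lj}\leq 1$ gives $\sum_{(k,l)\in\edgeset}\sum_{j\in S_l}x_{lj}\leq\sum_{(k,l)\in\edgeset}n_l$, so the whole expression is at most $\frac{\delta}{2n_k}$. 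The point of the denominator in the hypothesis is precisely to absorb the number $\sum_{(k,l)\in\edgeset}n_l$ of matrix entries that appears after the triangle-inequality step. Since this bound holds for every $k\in\agentset$, every $i\in S_k$ and every $\x_{-k}\in\Delta_{-k}$, condition~(\ref{eqn::reward-MPD}) of Proposition~\ref{prop::reward-MPD} is met, and that proposition yields $d(\Gamma_1,\Gamma_2)\leq\delta$.

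I do not expect a genuine obstacle here; the only things to be careful with are bookkeeping issues. One should make clear that the hypothesis bounding $|(A^{kl})_{ij}-(B^{kl})_{ij}|$ is read over all incident ordered pairs $(k,l)$, so that it simultaneously controls the matrices governing the rewards of each agent and of its neighbours, and one should keep straight that the $\sum_{(k,l)\in\edgeset}$ appearing throughout ranges over the edges incident to the fixed agent $k$. A slightly sharper intermediate estimate is available by keeping $\sum_{j\in S_l}x_{lj}=1$ and bounding by the degree of $k$ (which is no larger than $\sum_{(k,l)\in\edgeset}n_l$), but the cruder route above already closes the argument.
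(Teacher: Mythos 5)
Your proposal is correct and follows essentially the same route as the paper: expand the network-game rewards, apply the triangle inequality with $x_{lj}\leq 1$ to absorb the factor $\sum_{(k,l)\in\edgeset}n_l$, and conclude via Proposition~\ref{prop::reward-MPD}. (You even cite the right proposition --- the paper's own proof mistakenly points to Proposition~\ref{prop::rewarddist} --- and your remark about the sharper bound via $\sum_{j}x_{lj}=1$ is a valid minor improvement.)
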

    
    \begin{proof}
        \begin{align*}
            &|a_{ki}(\x_{-k}) - b_{ki}(\x_{-k})| \\
            \leq& |\sum_{(k, l) \in \edgeset} \sum_{j \in S_l} (A^{kl} - B^{kl})_{ij} x_{lj}| \\
            \leq& \sum_{(k, l) \in \edgeset} \sum_{j \in S_l} |(A^{kl} - B^{kl})_{ij}| \leq \frac{\delta}{2 n_k}
        \end{align*}
        Then the result follows from Proposition \ref{prop::rewarddist}.
    \end{proof}
    
    \begin{corollary} \label{corr::2-norm}
        Suppose that, in the setting of Proposition \ref{corr::abs-distance}, $\Gamma_1$, $\Gamma_2$ are such that for all $(k, l) \in \edgeset$
        \begin{equation}
            ||A^{kl} - B^{kl}||_2 \leq \frac{\delta}{2 n_k \sum_{(k, l) \in \edgeset} n_l}
        \end{equation}
        where the matrix norm for a matrix $A \in M_{m \times n (\R)}$ is given by $||A||_2 = \sup_{\x \in \R^n \, :\, ||x||_2 \neq 0} \frac{||A \x||_2}
        {||\x||_2}$. Then $d(\Gamma_1, \Gamma_2) \leq \delta$.
    \end{corollary}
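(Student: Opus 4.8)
The plan is to deduce Corollary~\ref{corr::2-norm} directly from Corollary~\ref{corr::abs-distance} by bounding the individual entries of each $A^{kl}-B^{kl}$ in terms of its operator $2$-norm. The only ingredient needed is the elementary fact that, for any real matrix $M$ and any admissible indices $i,j$, one has $|M_{ij}| \leq \|M\|_2$. To see this, write $M_{ij} = \e_i^\top M \e_j$ with $\e_i,\e_j$ the relevant standard basis vectors; then by Cauchy--Schwarz and the definition of the operator norm, $|\e_i^\top M \e_j| \leq \|\e_i\|_2\,\|M\e_j\|_2 \leq \|\e_i\|_2\,\|M\|_2\,\|\e_j\|_2 = \|M\|_2$, since $\|\e_i\|_2 = \|\e_j\|_2 = 1$. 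One should be mildly careful that the edge $(k,l)$ contributes an $n_k \times n_l$ matrix, so here $\e_i \in \R^{n_k}$ and $\e_j \in \R^{n_l}$; the argument is otherwise dimension-agnostic.

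Applying this observation with $M = A^{kl} - B^{kl}$ for each $(k,l)\in\edgeset$, I would obtain, for all $i \in S_k$ and $j \in S_l$,
\[
\left|(A^{kl} - B^{kl})_{ij}\right| \;\leq\; \|A^{kl} - B^{kl}\|_2 \;\leq\; \frac{\delta}{2 n_k \sum_{(k,l)\in\edgeset} n_l},
\]
where the last inequality is precisely the hypothesis of Corollary~\ref{corr::2-norm}. This entrywise bound is exactly the assumption required by Corollary~\ref{corr::abs-distance}, which then yields $d(\Gamma_1,\Gamma_2) \leq \delta$, completing the proof.

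I do not expect any genuine obstacle here: the result is a one-line consequence of the standard norm inequality $\|M\|_{\max} \leq \|M\|_2$, which I have chosen to prove inline via test vectors so that the argument is self-contained. The only points to watch are keeping the direction of the inequality correct (operator norm dominates the max entry, not the other way around) and making sure the chain of bounds threads cleanly into the hypothesis of the preceding corollary rather than re-deriving the passage from entries to \eqref{eqn::MPD} from scratch.
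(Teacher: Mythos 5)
Your proof is correct and takes essentially the same route as the paper: both reduce the hypothesis to the entrywise bound $|(A^{kl}-B^{kl})_{ij}| \leq \|A^{kl}-B^{kl}\|_2$ via the test vectors $\e_i, \e_j$ and then invoke Corollary~\ref{corr::abs-distance}. The only cosmetic difference is that you derive the entrywise bound by Cauchy--Schwarz from the stated operator-norm definition, whereas the paper invokes the equivalent bilinear characterisation $\|M\|_2 = \sup\{\x^\top M \y : \|\x\|_2=\|\y\|_2=1\}$ and handles the absolute value via $\|M\|_2 = \|-M\|_2$; your version is marginally more self-contained.
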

    
    \begin{proof}
        An equivalent definition of the $2$-norm is given by
        \begin{equation*}
            ||A||_2 = \sup \{ \x^\top A \y \, : \, \x, \y \in \R^n, ||\x||_2 = 1, ||\y||_2 = 1 \}
        \end{equation*}
        Now, for any $k \in \agentset$ and $i \in S_k$ let $\e_i$ be the $i$'th unit vector in $R^{n_k}$, i.e. $\e_i$ has all zeros except for in the
        $i$'th entry, where it is one. Clearly, $||\e_i||_2 = 1$, so that, for any $(k, l) \in \edgeset$, $i \in S_k$, $j \in S_l$
        \begin{equation*}
            ||A^{kl} - B^{kl}||_2 \geq \e_i^\top (A^{kl} - B^{kl}) \e_j = (A^{kl})_{ij} - (B^{kl})_{ij}
        \end{equation*}
        Applying also that $||A^{kl} - B^{kl}||_2 = ||B^{kl} - A^{kl}||_2$, if $||A^{kl} - B^{kl}||_2 \leq \frac{\delta}{2 n_k \sum_{(k, l) \in \edgeset}
        n_l}$, it holds that $\left| A^{kl}_{ij} - B^{kl}_{ij} \right| \leq \frac{\delta}{2 n_k \sum_{(k, l) \in \edgeset} n_l}$. The result then follows
        from Corollary \ref{corr::abs-distance}.
    \end{proof}
    
    Using the process outlined in this section, it is possible to determine approximate convergence in competitive, but not zero sum, network games. Its advantage lies in the fact that
    the QRE of NZSGs are unique for any $T_k > 0$ and it is known that Q-Learning, for any initial condition must converge to this QRE 
    \cite{piliouras:zerosum}. Therefore, the aforementioned process provides a method to determine approximate convergence of Q-Learning in $\Gamma$ for
    any initial condition.
    
    
    \section{Experiments on Near NZSG} 
    \label{sub:experiments_on_near_nzsg}
    
    In our experiments we examine the implications of Theorem 1. In particular we confirm that Q-Learning in near NZSG asymptotically remain close to the QRE of the NZSG. We also examine the implication of this finding for the introduction of noise in the payoffs. 
    
    \paragraph{Visualising Theorem \ref{thm::NZSGConv}} 
    \label{par:visualising_theorem_thm::nzsgconv}
    
    \begin{figure*}[t!]
    \centering
         \begin{subfigure}[b]{0.225\textwidth}
             \centering
             \includegraphics[width=\textwidth]{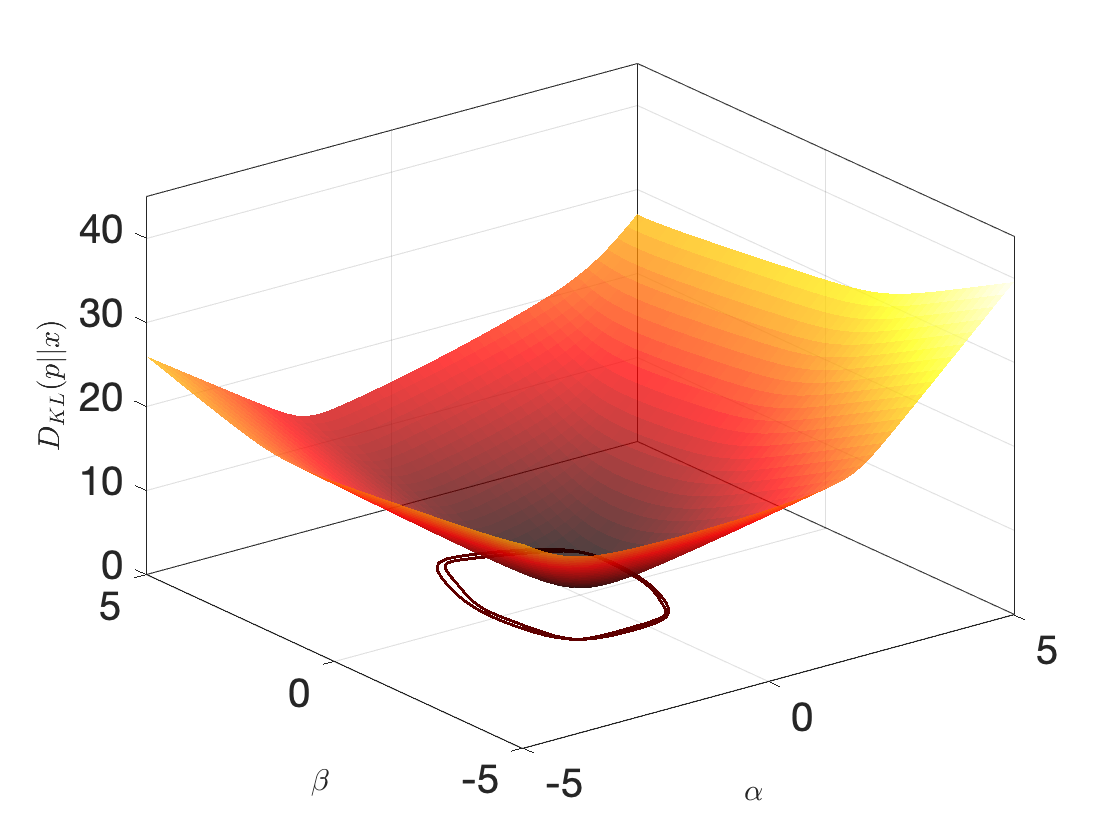}
             \caption*{N = 5}
         \end{subfigure}
         \begin{subfigure}[b]{0.225\textwidth}
             \centering
             \includegraphics[width=\textwidth]{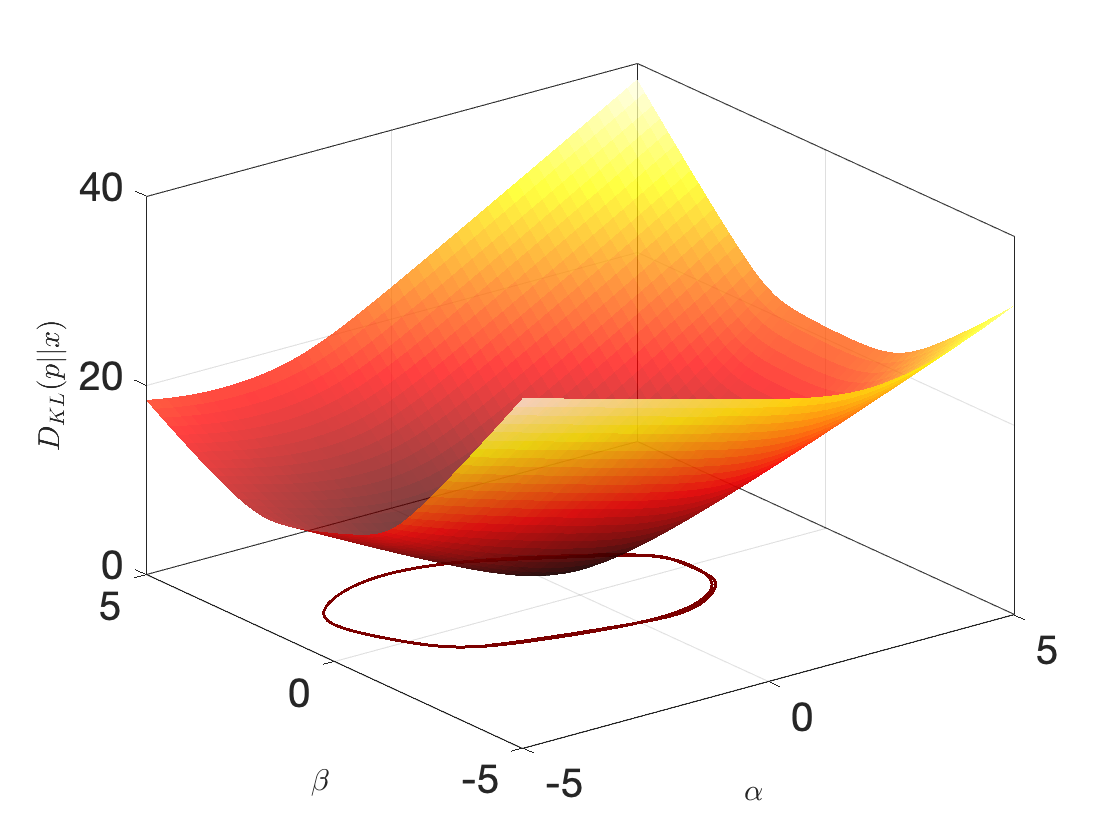}
             \caption*{N = 7}
         \end{subfigure}
         \begin{subfigure}[b]{0.225\textwidth}
             \centering
             \includegraphics[width=\textwidth]{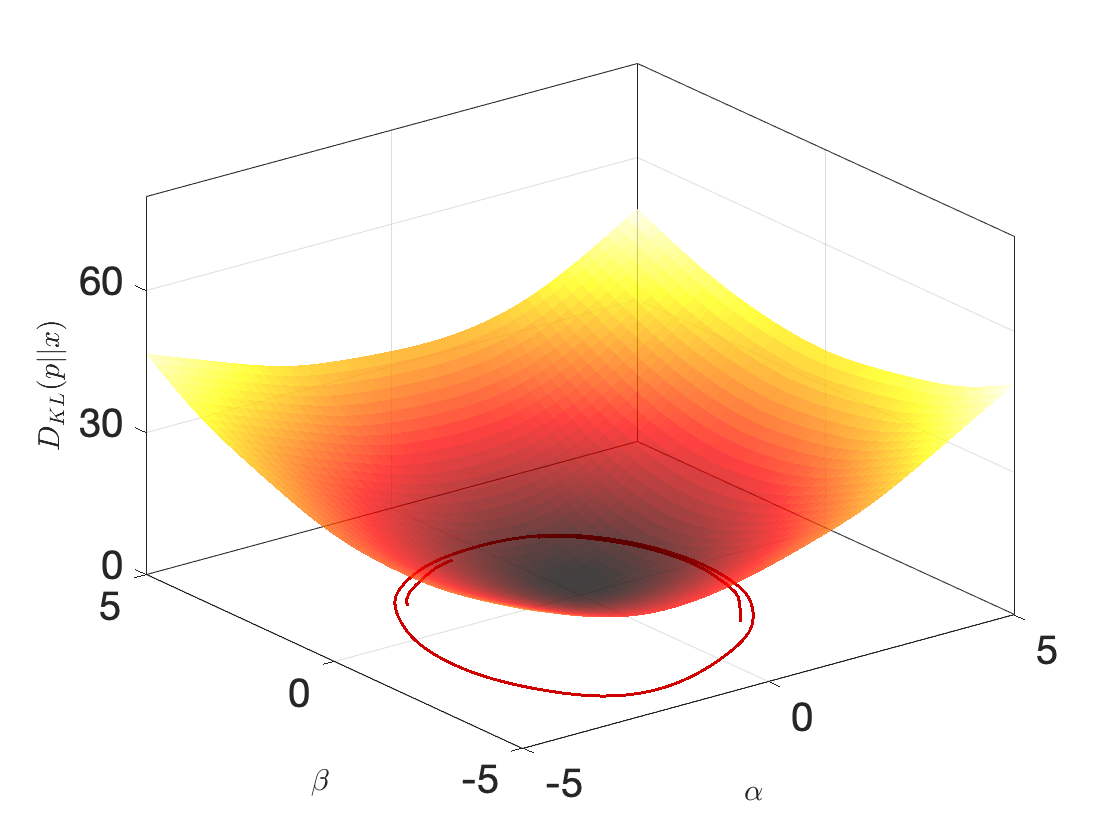}
             \caption*{N = 13}
         \end{subfigure}
         \begin{subfigure}[b]{0.225\textwidth}
             \centering
             \includegraphics[width=\textwidth]{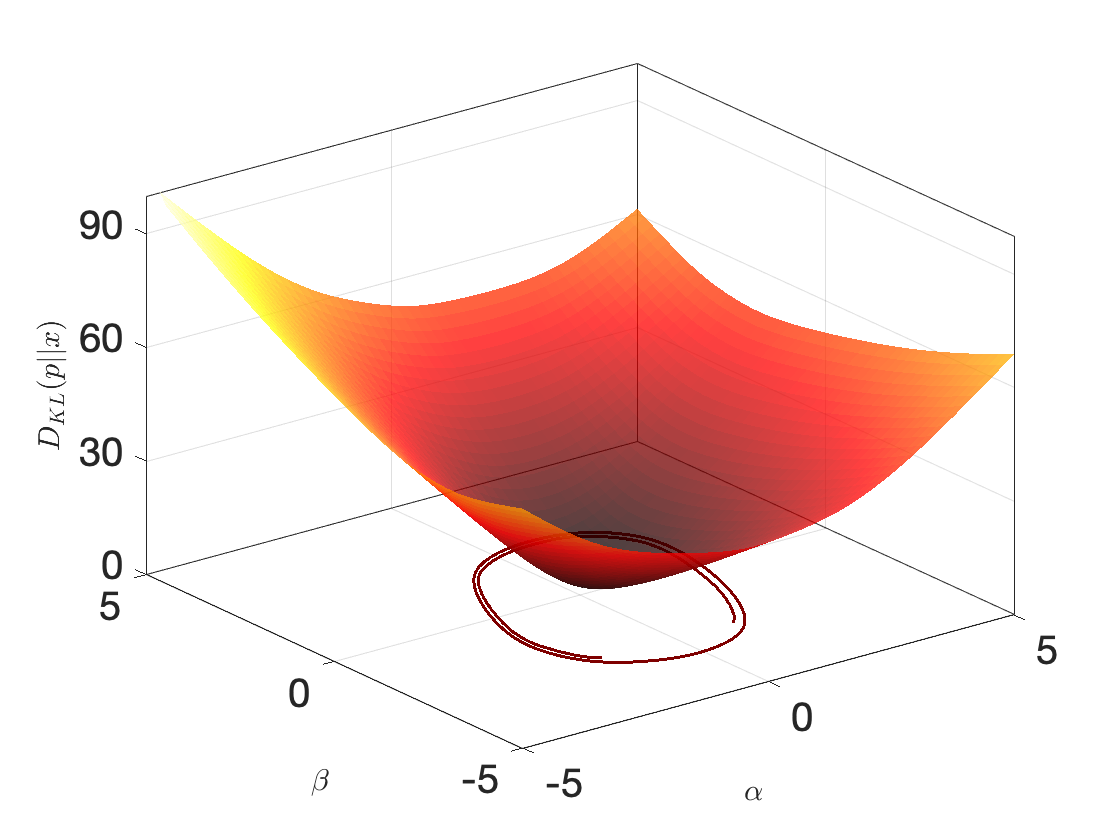}
             \caption*{N = 17}
         \end{subfigure}
            \caption{Visualisation of the KL-divergence between the unique QRE and a mixed strategy in an NZSG, alongside a depiction of the region
            to which Q-Learning converges in nearby games. The minimum of the KL-divergence occurs at zero and the region is a neighbourhood around
            the minimiser (i.e. the QRE of the NZSG). In all cases, we choose $\delta = 1$ and $T = 0.75$ whilst we vary the number of players $N$.}
            \label{fig::KLplots}
    \end{figure*}

    In Figure \ref{fig::KLplots} we visualise the region to which Q-Learning converges as predicted by Theorem \ref{thm::NZSGConv}. In particular, we generate a two-action network zero-sum game for a given number of agents. We then plot the KL-divergence from the QRE $\p$ for a given exploration
    rate, using the dimensionality reduction technique of \cite{li:visualise}, which was adapted for the KL-Divergence by \cite{piliouras:zerosum}. The
    procedure is as follows. We first run the Q-Learning dynamics to determine the QRE in the NZSG. Then, we generate two random vectors
    $u, v \in [0, 1]^N$ which denote the probability with each agent plays their first action. We transform both $u, v$ as
    \begin{align*}
        \Tilde{u}_k = \ln \frac{u_k}{1 - u_k}, \; \; \Tilde{v}_k = \ln \frac{v_k}{1 - v_k}
    \end{align*}
    Next, we take a linear combination of these transformed vectors to yield $z = \alpha \Tilde{u} + \beta \Tilde{v}$ for some choice of 
    $\alpha, \beta \in \R$. Finally, the point $z$ is mapped back into
    the unit simplex via
    \begin{align*}
        \Tilde{z}_k = \frac{\exp(z_k)}{1 + \exp{z_k}}
    \end{align*}
    This becomes the point against which we measure the KL Divergence to the QRE $\p$. The complete algorithm for this process can also be found in \cite{piliouras:zerosum}.
    We plot, on the $x-y$ plane, the contour $D_{KL}(\p || \z) = \frac{N \delta}{T_{\min}}$ for some choice
    of $\delta, T_{\min}$. It is clear that this forms a neighbourhood around the QRE of the NZSG; the implication of Theorem \ref{thm::NZSGConv} is that, in games which are at most $\delta$ away from the NZSG, Q-Learning will asymptotically remain trapped in this neighbourhood.
    
    
    \begin{figure}[tb]
        \centering
        \includegraphics[width=0.5 \textwidth]{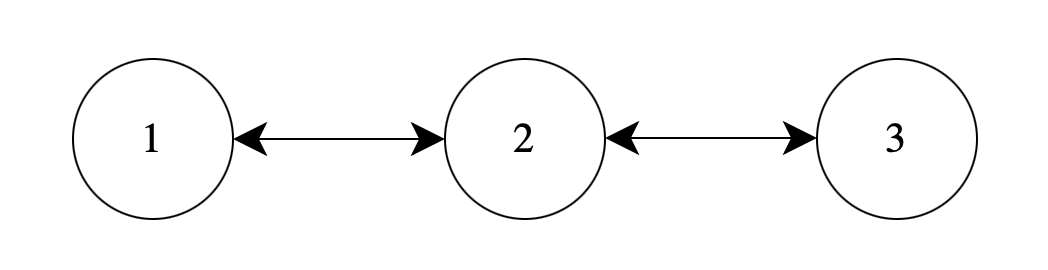}
        \caption{Three Player Chain Network Game}
        \label{fig::3pchain}
    \end{figure}
    
    \paragraph{Three Player Chain} 
    \label{par:three_player_chain}

    \begin{figure*}[t]
    \centering
         \begin{subfigure}[b]{0.45\textwidth}
             \centering
             \includegraphics[width=0.65\textwidth]{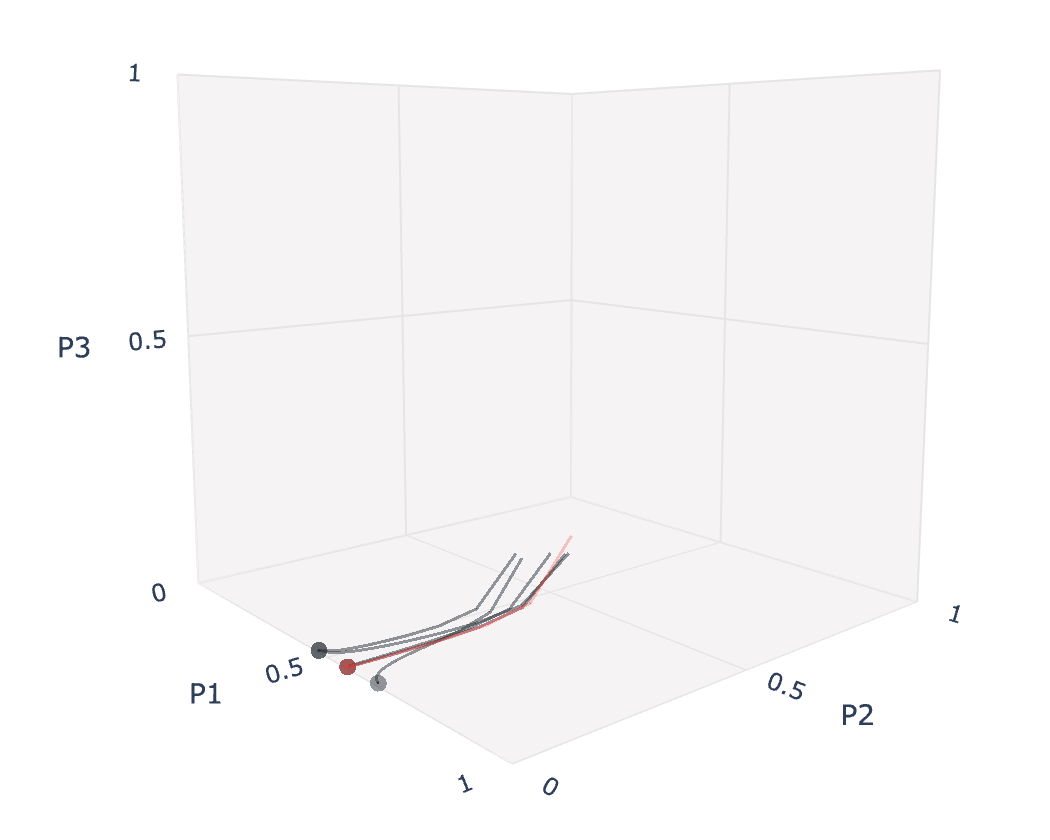}
              \caption*{$\delta \leq 0.75$}
         \end{subfigure}
         \begin{subfigure}[b]{0.45\textwidth}
             \centering
             \includegraphics[width=0.75\textwidth]{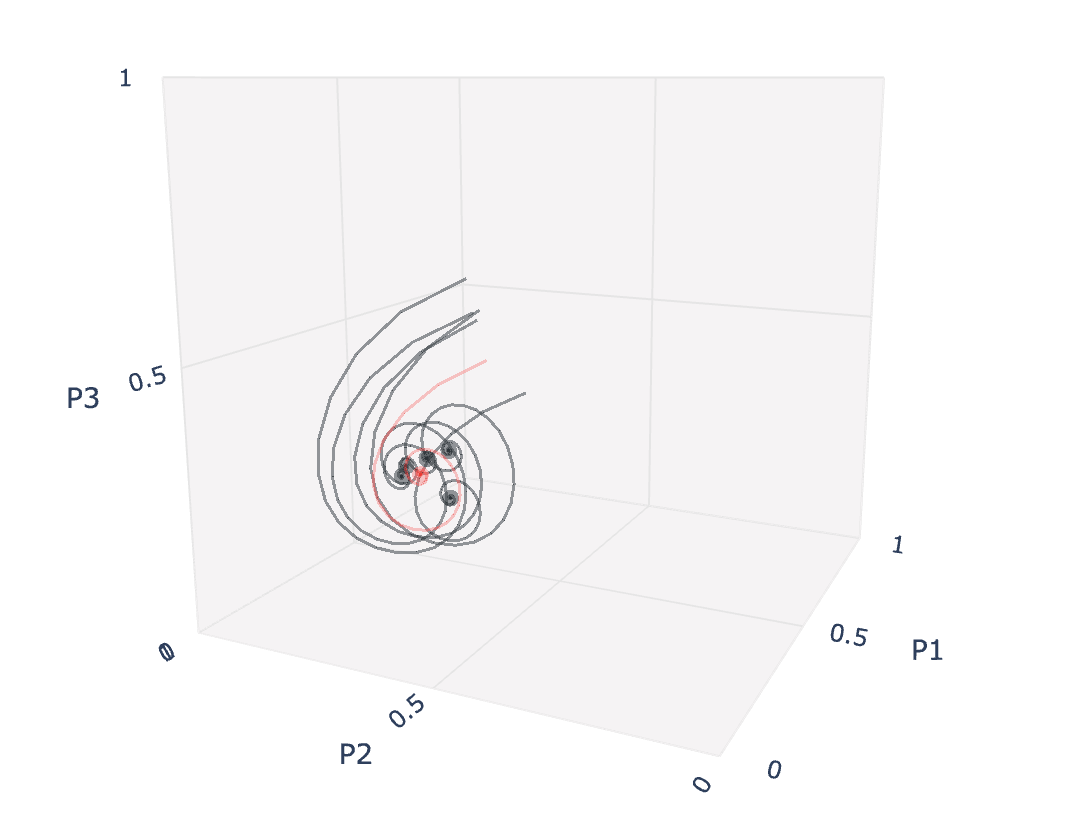}
              \caption*{$\delta \leq 2$}
         \end{subfigure}
    
            \caption{Trajectories of Q-Learning in near NZSG. In each plot, the red line depicts Q-Learning in an NZSG and the black 
        depicts Q-Learning in a nearby game which is not zero sum. Q-Learning converges to an equilibrium in
        the near-NZSG (black marker), where the equilibrium is `close' to the QRE of the NZSG (red marker). In all cases $T = 0.75$.}
            \label{fig::NearEq}
    \end{figure*}
    
    \begin{figure*}[t]
    \centering
        \begin{subfigure}[b]{0.3\textwidth}
             \centering
             \includegraphics[width=\textwidth]{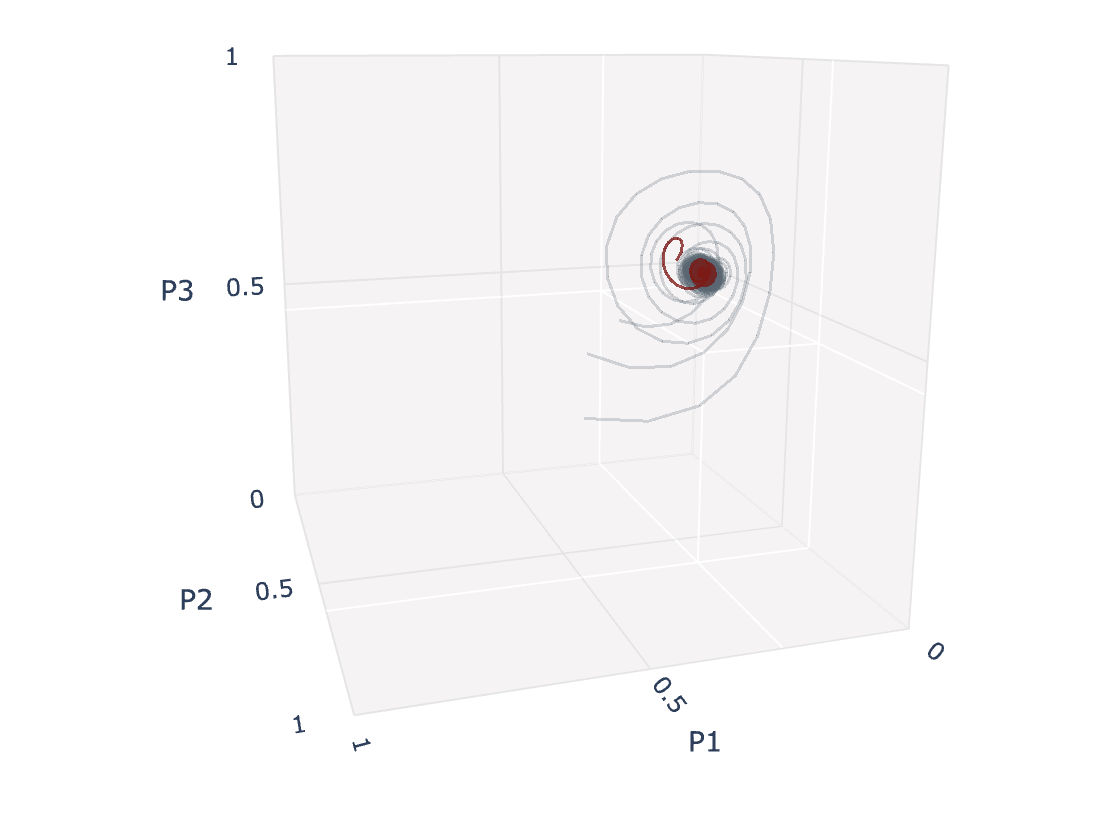}
             \caption*{$\delta \leq 075$}
         \end{subfigure}
         \begin{subfigure}[b]{0.3\textwidth}
             \centering
             \includegraphics[width=\textwidth]{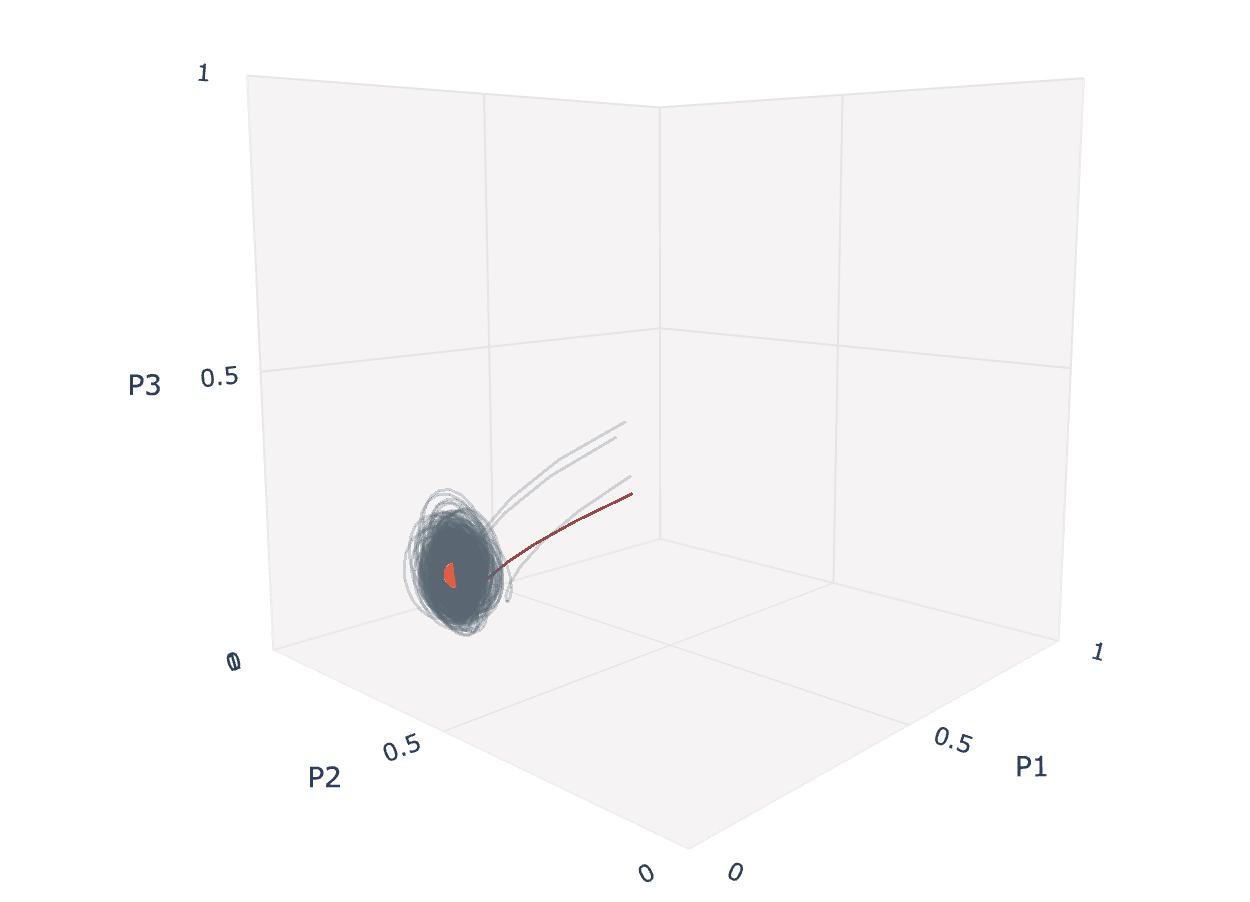}
             \caption*{$\delta \leq 2$}
         \end{subfigure}
         \begin{subfigure}[b]{0.3\textwidth}
             \centering
             \includegraphics[width=\textwidth]{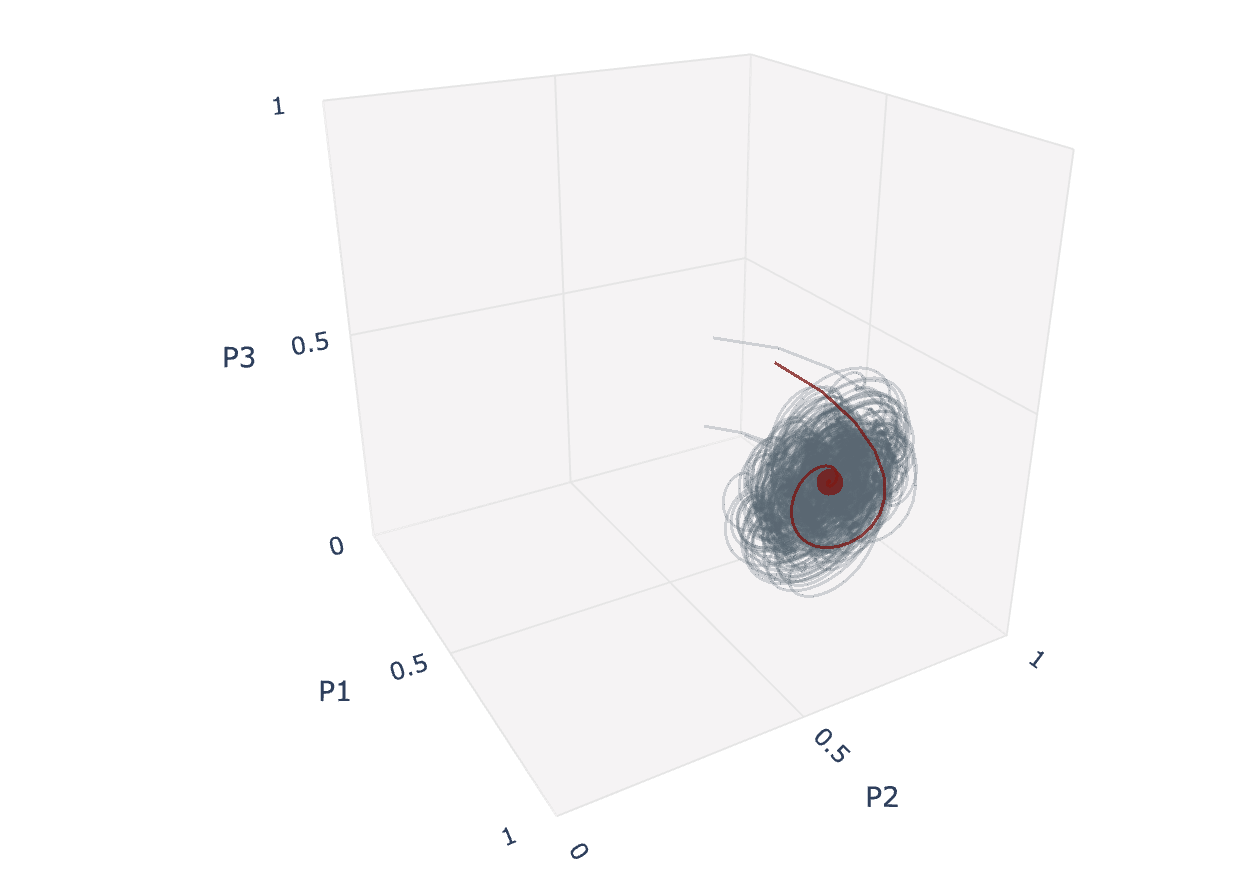}
             \caption*{$\delta \leq 3$}
         \end{subfigure}
            \caption{Trajectories of Q-Learning on an NZSG in the presence of additive noise. The noise is such that the perturbed game is always close to
            the NZSG. In this case, Q-Learning does not reach an fixed point, but will still remain asymptotically within a region surrounding the QRE
             of the NZSG (red marker). In all cases $T = 0.75$.}
            \label{fig::Noise}
    \end{figure*}
    
    We examine a `chain' network with three agents connected as in Figure \ref{fig::3pchain} where each agent has two actions. We generate a zero-sum game
    and run Q-Learning on this game to find its QRE \cite{piliouras:zerosum}. For the sake of simplicity we assume that all agents have the same
    exploration rate $T_k$ so that we replace the notation $T_k$ or $T_{\min}$ with simply $T$. Then, we perturb the payoff matrices to generate five
    near zero-sum games. We can use Corollary \ref{corr::abs-distance} to determine an upper bound on the distance between these games from the NZSG in
    terms of (\ref{eqn::MPD}).
    
    The results from this experiment are shown in Figure \ref{fig::NearEq}. The figures plot the probabiliy by which each player plays their first
    action. In all cases, the near-NZSG converge to equilibria (depicted with black markers) who are close to the QRE of the NZSG (red marker). The
    distance of the QRE of the pertubed games from the original increases as $\delta$ is increased from $0.75$ to $2$.
    
    When examining the effect of noise, we take the same network game setup and periodically (every 50 iterations) add noise to the payoff matrices to
    perturb the game away from the zero sum. By ensuring that the perturbations satisfy Corollary $\ref{corr::abs-distance}$ for some $\delta$, we can
    determine an upper bound on (\ref{eqn::MPD}). The results are shown in Figure \ref{fig::Noise}. The power of Theorem \ref{thm::NZSGConv} is
    apparent in this setting since, in this case, Q-Learning will not converge to an equilibrium. Despite this, since the perturbations are upper bounded
    by $\delta$, Theorem \ref{thm::NZSGConv} enforces that the trajectories remain within the neighbourhood of the QRE of the original game. This ensures
    the robustness of Q-Learning under the presence of noise. Note that, whilst in the experiments we use additive noise, Theorem \ref{thm::NZSGConv}
    makes no such assumption. The only requirement is that the perturbations are bounded. Of course, the larger this bound is, the larger the
    neighbourhood, as evidenced by the increase in spread of the Q-Learning trajectories as $\delta$ is increased.
    
    \paragraph{Ten Player Network} 
    \label{par:ten_player_network}
    
    \begin{figure*}[t!]
    \centering
         \begin{subfigure}[b]{0.45\textwidth}
             \centering
             \includegraphics[width=1.1\textwidth]{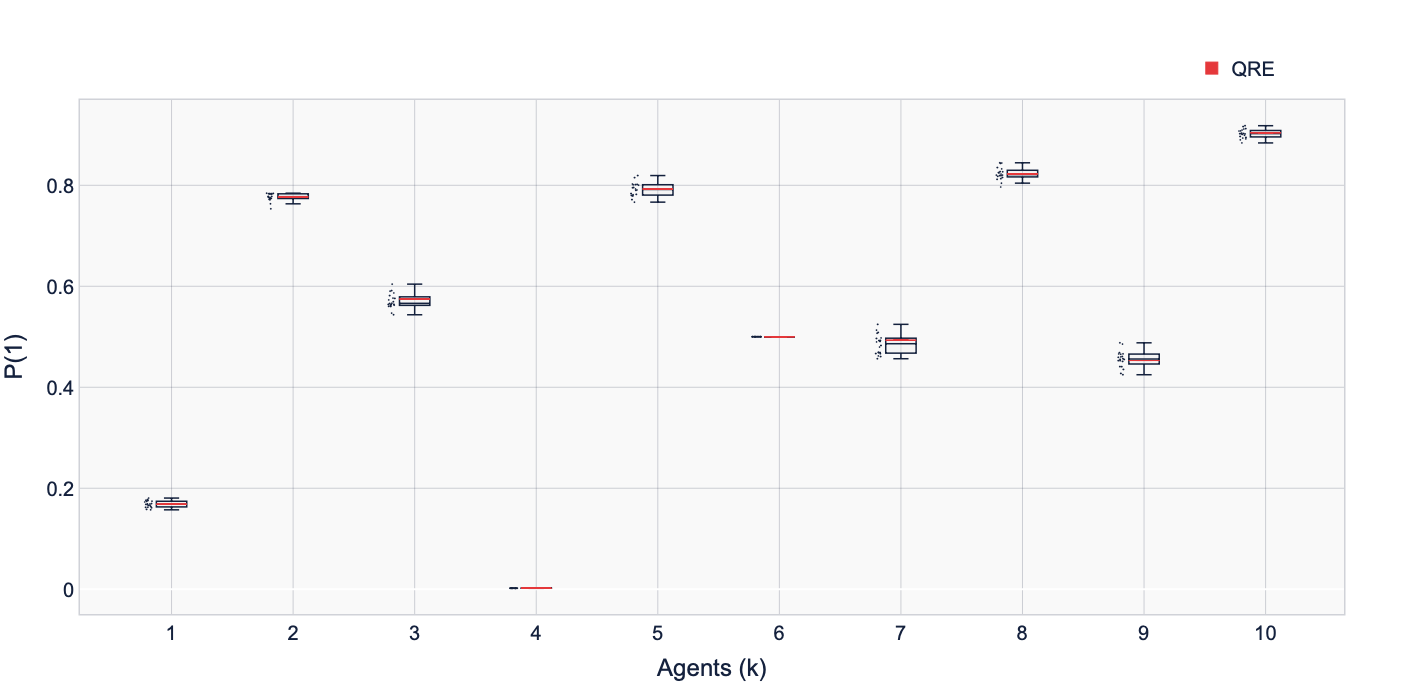}
             \caption*{$\delta \leq 2$}
         \end{subfigure}
         \begin{subfigure}[b]{0.45\textwidth}
             \centering
             \includegraphics[width=1.1\textwidth]{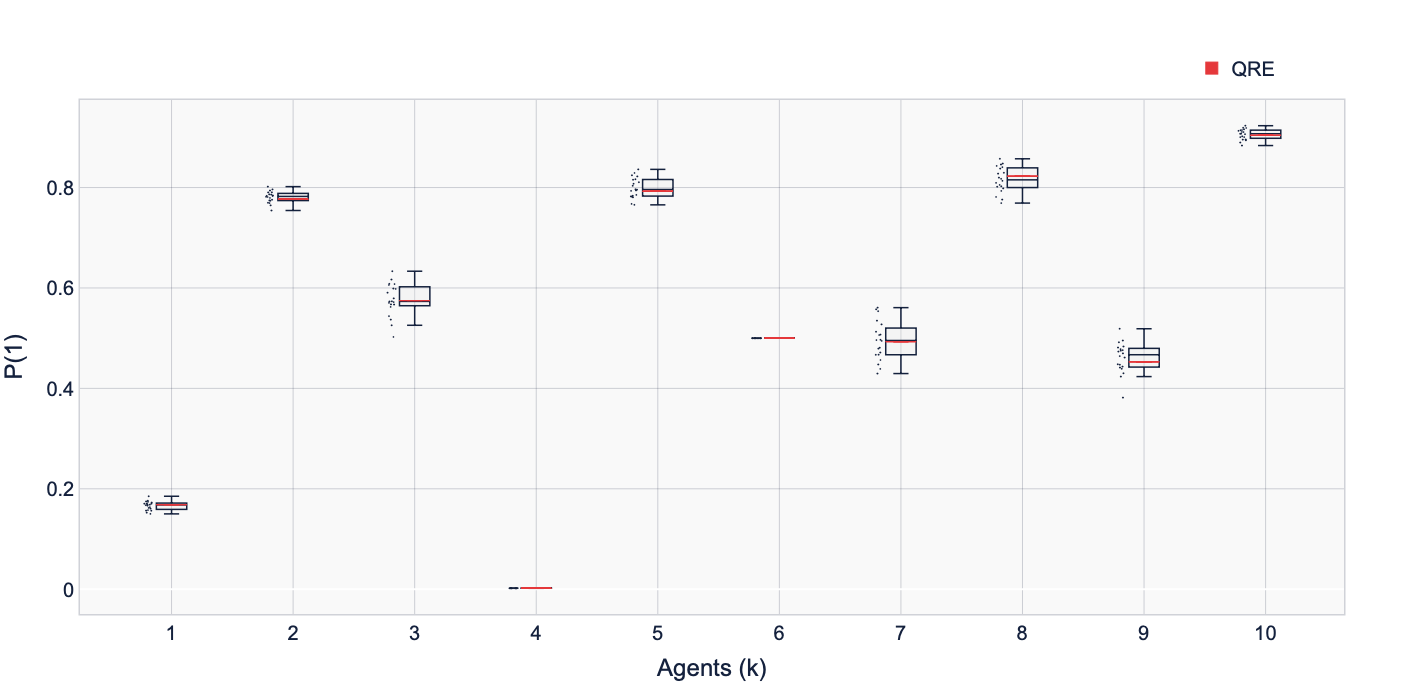}
             \caption*{$\delta \leq 3$}
         \end{subfigure}
         \begin{subfigure}[b]{0.45\textwidth}
             \centering
             \includegraphics[width=1.1\textwidth]{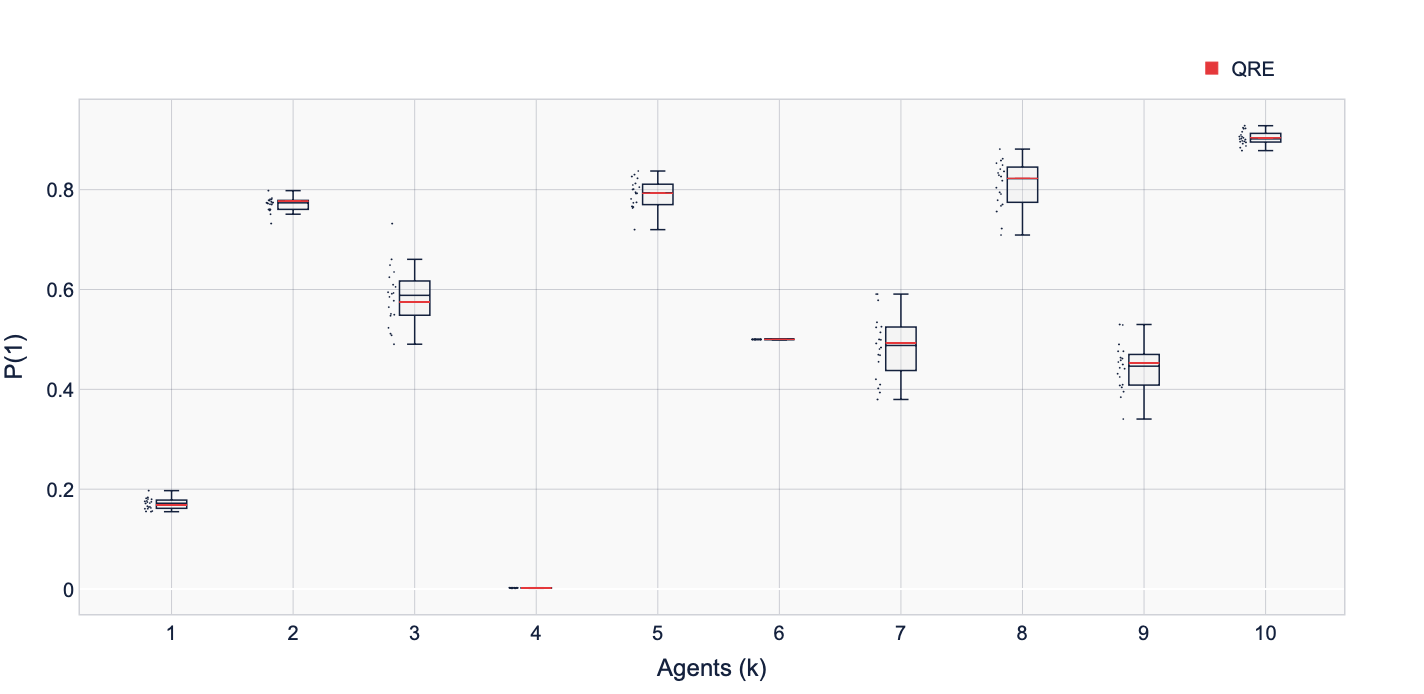}
             \caption*{$\delta \leq 5$}
         \end{subfigure}
         \begin{subfigure}[b]{0.45\textwidth}
             \centering
             \includegraphics[width=1.1\textwidth]{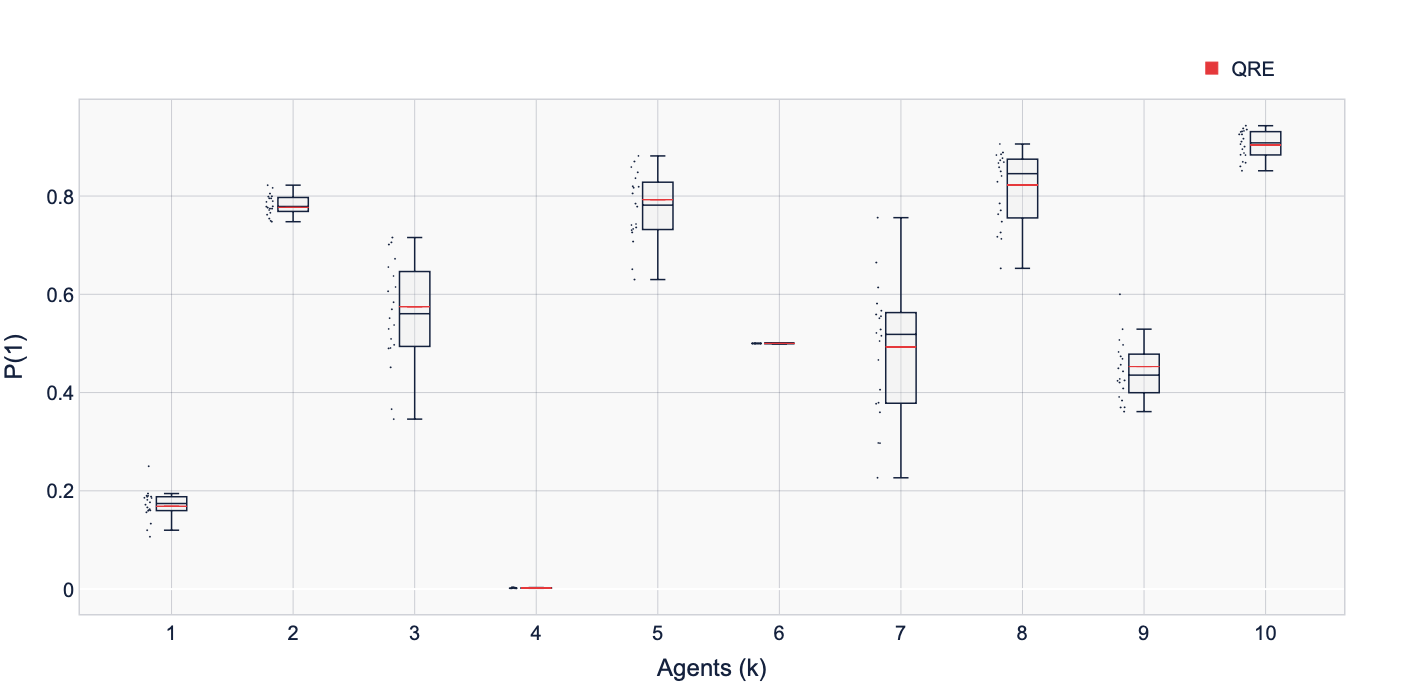}
             \caption*{$\delta \leq 9$}
         \end{subfigure}
            \caption{Summary Statistics of Last-Iterates of Q-Learning in 100 near-NZSGs. The y-axis depicts the probability which each agent assigns to
             their first action. The red line depicts the QRE of the NZSG whilst the box depicts the spread of last iterates in the near-NZSG. Markers
             next to the boxes depict the last iterate after $1 \times 10^6$ iterations in each game. In all cases $T=0.75$.}
            \label{fig::Barplots}
    \end{figure*}
    
    Finally, we extend our analysis to a 10-agent network where agents have two actions. In this case, the Q-Learning dynamics evolve in $\R^{20}$ and so
    it is not possible to visualise the trajectories. Rather, we generate a NZSG (the network and payoffs can be found in the Supplement) and 100
    near-NZSG which are generated in the same manner as the three-player chain network. Figure \ref{fig::Barplots} shows a summary of the last iterates
    of Q-Learning in 100 randomly generated near-zero sum game after $1 \times 10^6$ iterations. The behaviour agrees with the results in the lower
    dimensional case. In particular, it is clear that the last iterate of Q-Learning for all nearby games is within a bounded region around the QRE of
    the NZSG.
    
    \paragraph{Approximate Behaviour in a Conflict Network} 
    \label{par:approximate_behaviour_of_a_conflict_network}
    \begin{figure*}[b!]
    \centering
        \begin{subfigure}[b]{0.3\textwidth}
             \centering
             \includegraphics[width=\textwidth]{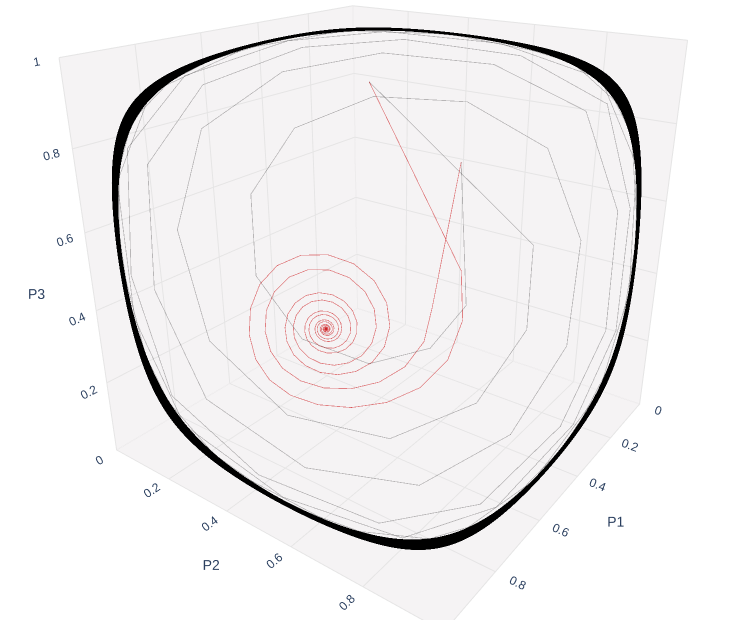}
             \caption*{$T = 0.35$}
         \end{subfigure}
         \begin{subfigure}[b]{0.3\textwidth}
             \centering
             \includegraphics[width=\textwidth]{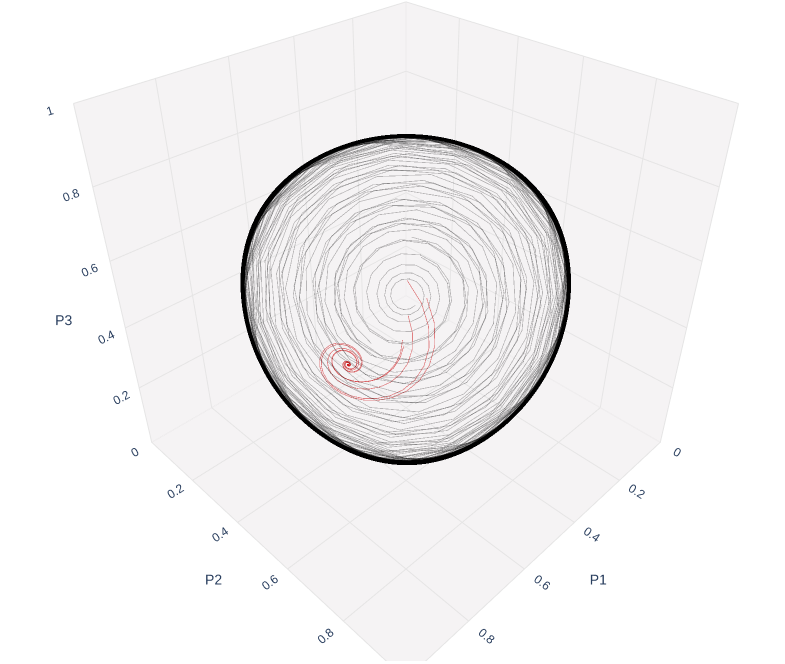}
             \caption*{$T = 0.5$}
         \end{subfigure}
         \begin{subfigure}[b]{0.3\textwidth}
             \centering
             \includegraphics[width=\textwidth]{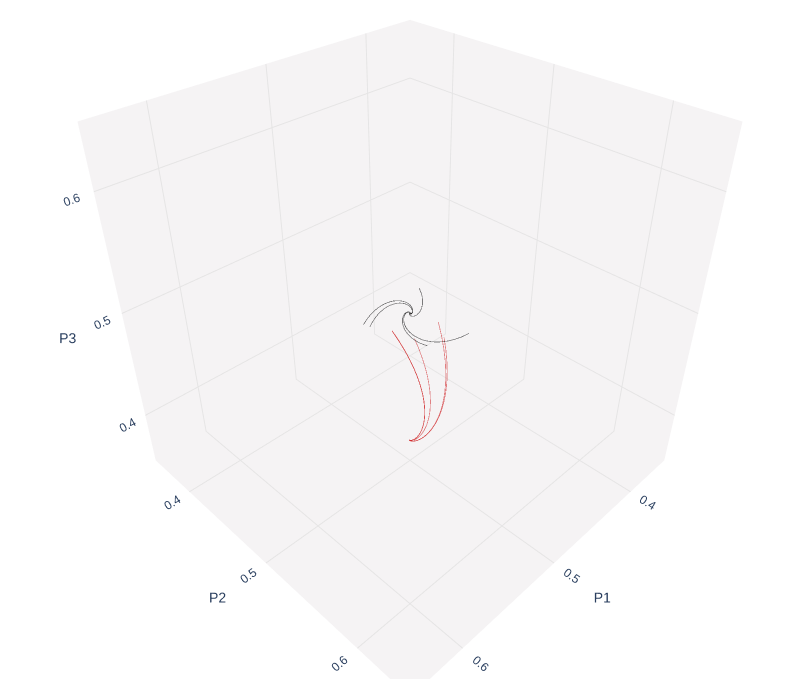}
             \caption*{$T = 2.5$}
         \end{subfigure}
            \caption{Dynamics of Q-Learning in a Conflict Network. Axes indicate the probability with which each agent
            plays their first action. Black trajectories denote the dynamics in the conflict network. Red trajectories denote
            the dynamics in the nearest network zero sum game. Q-Learning dynamics in the NZSG converge to a QRE. In the conflict network,
            they converge to a neighbourhood of the QRE, whose size decreases with increasing $T$}
            \label{fig::Conflict_Network}
    \end{figure*}
    
    We now examine competitive games who do not satisfy the network zero sum assumption. To do this we consider \emph{conflict networks}
    as considered in \cite{ewerhart:fp}, which cover a wide array of competitive games including the widely studied Colenol Blotto
    game \cite{roberson:blotto}. Strictly speaking, a conflict network is one in which the pairwise bimatrix game $(A^{kl}, A^{lk})$
    satisfies
    \begin{align*}
        (A^{kl})_{ij} &= v^k (P^{kl})_{ij} - c^{kl}_i \\
        (A^{lk})_{ji} &= v^j (P^{lk})_{ji} - c^{lk}_j \\
    \end{align*}
    
    where $v_k, v_l > 0$, $c_{kl} \in \R^{n_k}, c_{lk} \in \R^{n_l}$ and $(P^{kl})_{ij} + (P^{lk})_{ji} = 1$ for all $i \in
    S_k, j \in S_l$. As pointed out in \cite{ewerhart:fp}, if $v_k = v_l$ for all $(k, l) \in \edgeset$, the conflict
    network is equivalent to a network zero sum game. Therefore, we consider the more interesting case where $v_k \neq
    v_l$. We ensuring that these conditions are satisfied, we generate a conflict network game, which we denote $\Gamma_C$
    of three agents. The network is fully connected and, for each agent $k$
    \begin{align*}
        A^{k, k+1} = \begin{pmatrix}
            2.4 & 6.6 \\ 4.5 & 3.1
        \end{pmatrix}, \; \; A^{k, k-1} = \begin{pmatrix}
            2.8 & 1.0 \\ 4.2 & 7.2
        \end{pmatrix},
    \end{align*} 
    and the sums $k + 1$, $k - 1$ are taken $\mod N$. As this game does not satisfy the network zero sum
     assumption, it is not necessary that (\ref{eqn::QLD}) converges to a QRE, as shown in our experiments in Figure \ref
     {fig::Conflict_Network}. By applying the procedure in Section \ref{sub:finding_the_closest_nzsg}, we find the nearest
     network zero sum game which we call $\Gamma_Z$. The payoff matrices for $\Gamma_Z$ are given in the Appendix. Next, by
     using Corollary \ref{corr::2-norm}, it is possible to show that $d(\Gamma_C, \Gamma_Z) \leq 7.2$. With this and the
     fact that Q-Learning converges in $\Gamma_Z$ \cite{piliouras:zerosum}, it is possible to use Theorem \ref
     {thm::NZSGConv} to determine the approximate convergence of (\ref{eqn::QLD}) in $\Gamma_C$. For low values of $T$, the
     region to which Q-Learning converges is large, and takes up the entire simplex. However, this region becomes smaller
     as $T$ is increased, so that Q-learning converges to a smaller neighbourhood close to the QRE of $\Gamma_Z$. 
    
    
    \section{Conclusion} 
    \label{sec:conclusion}
    
    In this paper we begin developing an understanding of the smooth Q-Learning dynamics beyond strictly competitive many player games. We show that in
    games which are sufficiently close to satisfying the network zero-sum assumption, Q-Learning converges to within a region of a unique Quantal
    Response Equilibrium (QRE). The size of this region can be adjusted by controlling either the distance from the strictly competitive setting, or the
    exploration rates of the agents. Whilst the latter amounts to parameter tuning, we consider the former by determining a method to find, for a given
    network game, the nearest network zero-sum game (NZSG). In such a manner, the approximate behaviour of Q-Learning can be understood in arbitrary
    competitive games. In our experiments we demonstrate the utility of our results in practice, in particular showing that, even in the presence of
    noise, the asymptotic behaviour of Q-Learning can be understood in terms of distance from the QRE of an underlying NZSG. 
    
    Our results also present an avenue for extending beyond strictly cooperative settings. In particular, the approximate behaviour of Q-Learning in
    near-potential games can be examined, thus beginning to bridge the gap between strictly competitive and strictly cooperative games. Another
    interesting direction would be to extend towards \emph{weighted} NZSGs, which comprise a larger set of games than the \emph{exact} NZSG setting
    considered in this work. Finally, our method for finding the nearest NZSG requires the original game itself to be a bidirectional network game.
    Lifting this assumption would allow for the approximate behaviour of a wider class of multi-agent settings (e.g. leader-follower) games to be
    understood. 


\section*{Acknowledgments}
Aamal Hussain and Francesco Belardinelli are partly funded by the UKRI Centre for Doctoral Training in Safe and Trusted Artificial Intelligence (grant number EP/S023356/1). This research/project is supported in part by the National Research Foundation, Singapore and DSO National Laboratories under its AI Singapore Program (AISG Award No: AISG2-RP-2020-016), NRF 2018 Fellowship NRF-NRFF2018-07, NRF2019-NRF-ANR095 ALIAS grant, grant PIESGP-AI-2020-01, AME Programmatic Fund (Grant No.A20H6b0151) from the Agency for Science, Technology and Research (A*STAR) and Provost’s Chair Professorship grant RGEPPV2101.

\bibliographystyle{ieeetr}
\bibliography{references}

\end{document}